\newlength\Li \newlength\Lii 
\newcommand{\ignore}[1]{}
\newtheorem{thm}{Theorem}[section]
\newtheorem{cor}[thm]{Corollary}
\newtheorem{lemma}[thm]{Lemma}
\theoremstyle{definition}
\newtheorem{defn}[thm]{Definition}
\theoremstyle{remark}
\newtheorem{remark}[thm]{Remark}
\theoremstyle{note}
\newtheoremstyle{defnopunct}
{3pt}
{3pt}
{}
{}
{\bfseries}
{ }
{.5em}
{}
\theoremstyle{defnopunct}
\newtheorem*{hspnonum}{Hidden Subgroup Problem (HSP)}
\author{Asif Shakeel}
\address{Department of Physics, Haverford College, 
Haverford, PA 19041-1392, USA}
\email{ashakeel@haverford.edu}
\date{\today}
\title[An improved query for the Hidden Subgroup Problem] {An improved query for the Hidden Subgroup Problem}
\begin{document}


\begin{abstract}
An equal superposition query with $\vert 0 \rangle$ in the response register is used in the ``standard method"   of single-query algorithms for the  hidden subgroup problem (HSP). Here we introduce a different query, the character query, generalizing  the well-known phase kickback trick. This query maximizes the success probability of subgroup identification under a uniform prior,  for the HSP in which  the oracle functions take values in a finite abelian group. We then apply our results to the case when the subgroups are drawn from a set of conjugate subgroups and obtain a  success probability greater than that found by Moore and Russell.  
\end{abstract}

\maketitle



\section{Introduction} \label{section:intro}
The Hidden Subgroup Problem (HSP) has been the focal point of investigations into quantum algorithms for some time. Motivation for this direction of research originates in  the  wide variety of problems in classical computation that can be formulated as or can be reduced to instances of the HSP~\cite{nc:qcqi}. Some had been solved or were at least problems of great interest, like the  prime factorization of an integer, prior to the HSP formulation. In that and other such cases involving the abelian group $\mathbb{Z}/(N)$, the Quantum Fourier Transform of Shor~\cite{sp:ptfd} proved an invaluable tool for solving the HSP efficiently. Not all  problems involving the HSP  have underlying groups that are abelian. There are  problems that have significance in classical computation for independent reasons, with  connections to non-abelian groups.  Among these,  Ettinger and H\o yer~\cite{eh:qogi} have considered  the graph isomorphism problem,  and Regev~\cite{ro:qclp} has considered the search for the shortest vector in a lattice (SVP),  reducing these problems   to  HSPs over the symmetric group and the dihedral group respectively. Thus, many families  of  groups have been studied as a result of the HSP assuming a central role as a problem of interest.

Over time, the statement of the problem has  evolved in a subtle way due, in part, to the availability of ideas in related areas of  decision and estimation theory.  In the earlier version of the problem, an unknown subgroup is hidden by an oracle function,  provided  via responses to queries. The goal is to determine the subgroup  with maximum probability using a polynomial number of queries,  with  polynomial time computations, and utilization of polynomial space resources in the size of the group (i.e., efficiently). More recently, a different version is also being pursued, one that arises when average case solutions are more relevant.  In this version, the aim  is to find with highest probability the correct  subgroup from  an ensemble of subgroups that the oracle could hide. This ensemble is assigned  a prior probability distribution  giving  the likelihood of a particular subgroup being the  hidden subgroup.  A measurement and a decision algorithm are chosen to maximize the detection  probability. This moves the problem to the domain of estimation, where a state discrimination approach originally proposed by Holevo~\cite{ah:sdqs}, and Yuen, Kennedy, and Lax~\cite{ykl:opt} comes to the fore. This is more of a measurement based approach, in which a measurement is optimized to glean the most information for the decision stage. This approach to the HSP originated in the work of Ip~\cite{ip:sao}. His result showed the optimality of Shor's algorithm, for an abelian group and under the assumption of equal prior on all the possible subgroups.

There is a wide variation in  approaches to solve the HSP.   Kitaev~\cite{ak:qmas} and Mosca and Eckert~\cite{me:hspee} have done seminal work concerning  abelian groups,  and similarly  Ettinger and H\o yer~\cite{eh:qanc}, Kuperberg~\cite{kup:sed} and  Regev~\cite{reg:sep} have done work of great significance on the dihedral group.    As the choice of measurement strongly affects the performance of an algorithm,  Bacon, Childs and Van Dam~\cite{bcv:opd,bcv:omqa}, Moore and Russell~\cite{cmar:pgm}, Bacon and Decker~\cite{bd:opscm}, and Krovi and R{\"o}teller~\cite{kr:eqwh}, have derived important results on optimal measurements, both for general finite groups and for specific families of non-abelian groups.  In the context of state discrimination, several of these  algorithms have employed a particular measurement with much success: the Pretty Good Measurement (PGM)~\cite{cmar:pgm,bcv:opd,bcv:omqa}. The PGM will be used in our treatment as well.  It is often best viewed  in the Fourier domain. Besides the PGM implementation, one encounters the non-abelian Fourier transform  quite commonly in HSP algorithms, which are surveyed in~\cite{fw:hrp,cl:hrp}. We see applications of Fourier transform in HSP problems such as that in Hallgren, Russell, and Ta-Shma~\cite{hrt:nsr},where they  use it for  normal subgroup reconstruction, and in a host of other HSP algortihms~\cite{cmar:pgm,bd:opscm,kr:eqwh,db:cghg,mrrs:psfs}. This has  naturally led to efforts to find efficient algorithms to implement Fourier transform,  which is itself a non-trivial problem.  Among others,  Hales and Hallgren~\cite{hh:qft} have results for abelian groups, and Moore, Rockmore and Russell ~\cite{mrr:psfs} have those for non-abelian  groups.

We are concerned with the state discrimination version of the HSP, continuing to approach it as an optimization problem. Instead of keeping the query independent of the structure of the problem, however, we take the first step in the direction of optimizing the query, making it into a joint query-measurement optimization. Consistent with most of the literature, we impose an abelian structure on  the set in which the oracle function takes values. This allows us to get more control over the query, though not completely since there is no  assumption concerning the group to which the hidden subgroup belongs, except that it is a finite group.  We discover that with that simple constraint, the query that works  best  is one that generalizes the \textit{phase kickback trick}. The latter was first used to solve Deutsch's problem, namely, to find the parity of a function on the set $\{0,1\}$. As we will explain in the next section, Deutsch's problem is actually an HSP. 

We begin our discussion with a  brief introduction to  Deutsch's problem, which illustrates in key ways the ideas of this paper. These ideas are summarized, together with our assumptions and results, in  subsection~\ref{subsection:qs}, under the title \textbf{query search}.


\subsection{The phase kickback trick} \label{subsection:pk}

The Deutsch-Jozsa algorithm~\cite{dj:rspq,rj:qaft}  is one of the earliest examples of  quantum algorithms.  It  determines if a function:
\begin{equation*}
f: {\mathbb{Z}/(2)}^{\times n} \rightarrow \mathbb{Z}/(2)
\end{equation*} is constant or ``balanced'' ($0$ on exactly half of the domain). The phase kickback trick~\cite{cemm:qar} version of the algorithm solves  the problem with one oracle query. Its underlying problem (Deutsch's problem) for single bit ($n=1$)  is an instance of the hidden subgroup problem (HSP). Being a basic example, it provides a  good motivation for the rest of the discussion and a chance to familiarize the reader with some concepts and definitions that will come up.

We follow the path in~\cite{rj:qaft} to recall the algorithm and  compare  the ``standard method" (as defined in~\cite{mrrs:psfs}) with  the ``phase kickback trick".  

We are given  an oracle $\textbf{O}_f$ that implements the function $f$ as a unitary transformation on $\mathbb{C}[\mathbb{Z}/(2)] \otimes \mathbb{C}[\mathbb{Z}/(2)]$:
\begin{equation*}
\textbf{O}_f:\vert x\rangle\vert y\rangle \mapsto\vert x\rangle\vert y + f(x)\rangle 
\end{equation*}
The first qubit on which the oracle evaluates the function $f$ is the ``query register", and the second qubit to which the oracle evaluation is added is the ``response register".

In both  the methods,  the initial state, or the ``query",  is of the form: $\vert \Psi \rangle= \frac{1}{\sqrt{2}}(\vert 0 \rangle + \vert 1 \rangle)\vert v \rangle$.  In the standard method for Deutsch's problem  $\:\vert v \rangle $  is set to $\vert 0 \rangle$. In the phase kickback trick  $\:\vert v \rangle $ is  set to $ \frac{1}{\sqrt{2}}(\vert 0 \rangle - \vert 1 \rangle)$. It is this  response register part, $\:\vert v \rangle$,  of the query  that we will refer to as the ``slate" of the query. Let us write the two queries.
\begin{equation*}
\begin{array} {rll}
\vert \Psi_s \rangle &= \frac{1}{\sqrt{2}}(\vert 0 \rangle + \vert 1 \rangle)\vert 0 \rangle   \quad & \text{(the standard query)} \\
\vert \Psi_p \rangle &= \frac{1}{{2}}(\vert 0 \rangle + \vert 1 \rangle) (\vert 0 \rangle - \vert 1 \rangle)   \quad &\text{(the phase kickback query)} 
 \end{array}
\end{equation*}
 Recall the Hadamard transform $\textbf{H}$, Fourier transform on $\mathbb{Z}/(2)$. Its action on the computational basis is:
  \begin{equation*}
\textbf{H}:  \left \{ \begin{array} {ll}
\vert 0 \rangle &\mapsto \frac{1}{\sqrt{2}} (\vert 0 \rangle  +\vert 1 \rangle) =:\vert + \rangle \\
 \vert 1 \rangle &\mapsto \frac{1}{\sqrt{2}} (\vert 0 \rangle  -\vert 1 \rangle) =:\vert - \rangle 
  \end{array}
  \right .
\end{equation*} 
Rewrite the queries in terms of $\vert + \rangle$ and  $\vert - \rangle$:
\begin{equation*}
\begin{array} {rl}
\vert \Psi_s \rangle &=\vert + \rangle\vert 0 \rangle    \\
\vert \Psi_p \rangle &=\vert + \rangle\vert - \rangle    
 \end{array}
\end{equation*}

In the standard method,  the state  after the oracle evaluation is:
\begin{equation} \label{sms}
\textbf{O}_f \vert \Psi_s \rangle = \frac{1}{\sqrt{2}}\vert + \rangle \vert + \rangle   \:\:+ \:\:\frac{1}{2}\big( (-1)^{f(0)}\vert 0 \rangle +  (-1)^{f(1)}\vert 1 \rangle\big) \:\:\vert - \rangle  
 \end{equation}
whereas for the phase kickback trick, the state after oracle evaluation is:
\begin{equation} \label{pks}
\textbf{O}_f\vert  \Psi_p \rangle =  \frac{1}{\sqrt{2}} \big( (-1)^{f(0)}\vert 0 \rangle +  (-1)^{f(1)}\vert 1 \rangle \big)  \:\:\vert - \rangle
 \end{equation}

Next is the measurement. Consider the  state $\textbf{O}_f \vert \Psi_s \rangle $ arising from oracle evaluation on the standard query first.  An application of the Hadamard transform $\textbf{H}$ to each register (denoted by  $\textbf{H}\otimes \textbf{H}$), rotates the state  to:
 \begin{align*}
\textbf{H} \otimes \textbf{H} \left(\textbf{O}_f\vert \Psi_s \rangle \right)= \frac{1}{\sqrt{2}}\vert 0 \rangle\vert 0 \rangle   + \frac{1}{2\sqrt{2}} \Big[& \left((-1)^{f(0)} +  (-1)^{f(1)}\right)\vert 0 \rangle   \\
  &+ \left((-1)^{f(0)} -  (-1)^{f(1)}\right)\vert 1 \rangle\Big]\vert 1 \rangle 
 \end{align*}
Measure the response (second) register.  If the result is $0$, which occurs with probability $1/2$, there is no information about the function. Output either ``constant" or ``balanced"  (random guess), which would be correct with probability $1/2$.  If the result is $1$, then measure the  query (first) register.  If the  result  is $0$, output ``constant". If the result is $1$,  output  ``balanced". With overall probability $3/4$  we get the correct answer. 

Note that we can  work solely with the query register to get the same result through a slightly different route. Apply $\textbf{H}$ to the query register (denoted by $\textbf{H} \otimes \textbf{I}$,  where $ \textbf{I}$ is the identity operator on $\mathbb{C}[\mathbb{Z}/(2)]$). 
 \begin{align*}
\textbf{H} \otimes \textbf{I} \left(\textbf{O}_f\vert \Psi_s \rangle \right)= \frac{1}{\sqrt{2}}\vert 0 \rangle\vert + \rangle   + \frac{1}{2\sqrt{2}} \Big[& \left((-1)^{f(0)} +  (-1)^{f(1)}\right)\vert 0 \rangle   \\
  &+ \left((-1)^{f(0)} -  (-1)^{f(1)}\right)\vert 1 \rangle\Big]\vert - \rangle 
 \end{align*}
  Measure the query  register and output ``constant" if the result is $0$, otherwise output ``balanced". The algorithm succeeds with overall probability $3/4$ as before. So instead of making measurements on both the registers, we may as well just measure the query register. 

Now consider the  state after oracle evaluation on the phase kickback query , $\textbf{O}_f\vert  \Psi_p \rangle$.  Apply $\textbf{H}$ to the query register.
\begin{align*}
\textbf{H} \otimes \textbf{I}    \left(\textbf{O}_f\vert \Psi_p \rangle \right) &=  \frac{1}{2}\Big[\big((-1)^{f(0)} +  (-1)^{f(1)}\big)\vert 0 \rangle  +\big((-1)^{f(0)} -  (-1)^{f(1)}\big)\vert 1 \rangle\Big]\:\:\vert - \rangle 
 \end{align*}
Measure the query  register.  If the result  is $0$, output ``constant", otherwise output ``balanced". With probability $1$ we get the correct answer. The measurement is the same for both the methods and is confined to the query register. This measurement (including the Hadamard transform) of the query register is what we refer to as the ``measurement for Deutsch's problem". It can also be described as the measurement in the basis $\{\vert + \rangle,\vert - \rangle\}$ (without the Hadamard transform) of the query register. If the result is $\vert + \rangle$, output ``constant". If the result is $\vert - \rangle$,  output ``balanced".

This calculation is used to  compare the probability of success of the two  queries in  identifying the function as being constant or balanced. While demonstrating that,   it raises the question of how  in general  would one be able to derive such a query.   We approach the problem from the perspective of query search.  

\subsection{The hidden subgroup problem and the state discrimination approach} \label{subsection:hspsd}
Let us recall the general hidden subgroup problem as stated in~\cite{me:hspee}.
\begin{hspnonum} 
Let $G$ be a group,  $X$ a finite set, and  $f: G\rightarrow X$ a function. There exists a subgroup $H \leq G$ such that $f$ is constant and distinct on the cosets (assume left cosets) of $H$.  That is, the function  has the property:
\begin{equation} \label{ofunc}
f(gh)=f(g)  \quad \forall h \in H, \:\: g \in G
\end{equation}
and $f(g) = f(g') \iff g' \in gH$.  $f$ is accessed via queries to  an oracle. Using information gained from evaluations of $f$ via its oracle, determine a generating set for $H$.
\end{hspnonum}

We say that the function $f$  hides the subgroup $H$, and the oracle implements the function $f$. We call X  the ``response space", $\mathbb{C}[G]$  the ``query register" and $\mathbb{C}[X]$ the ``response register". Together they form the system $\mathbb{C}[G] \otimes \mathbb{C}[X]$ on which the oracle and measurement  act.

Within the standard method, optimal measurements have been determined in many cases using results from the state discrimination approach.   In this paradigm, a measurement is deemed optimal if it maximizes the probability of subgroup identification over some set of  subgroups of $G$  distributed according to a (usually uniform) prior probability.  Optimality can be confirmed by verifying simple checkable criteria due to Holevo~\cite{ah:sdqs}, and Yuen, Kennedy, and Lax~\cite{ykl:qar}. This technique has been shown to be very useful~\cite{cmar:pgm,bd:opscm,bcv:opd} in works  investigating  optimal measurements.  

In state discrimination, the objects to be distinguished are  a set of states $\{\rho_k\}_{k\in\mathscr{K}}$ on some finite dimensional Hilbert space $V$, indexed by a finite set $\mathscr{K}$, and distributed  according to  a probability function $\{p_k\}_{k \in \mathscr{K}}$ such that $p_k$ is the prior probability of occurrence of $\rho_k$. A measurement on $V$ is described by a set of operators (POVM) $\mathscr{E} := \{E_k\}_{k \in \mathscr{K}}$. These satisfy:  $E_k \geq 0$ and $\sum_{k \in \mathscr{K}} E_k = \textbf{I}$, where $\textbf{I}$ is  the identity operator. The measurement operator $E_k$ corresponds to the outcome $k$, associated to the state $\rho_k$.   The probability of successful state discrimination using a measurement  $\mathscr{E} =  \{E_k\}_{k \in \mathscr{K}}$ is:
\begin{equation} \label{sucs}
S(\mathscr{E}) :=    \sum_{k \in \mathscr{K}} p_k \:\: \text{tr} \: (E_k \rho_k)
\end{equation}	
The idea is to find a measurement that maximizes $S(\mathscr{E})$  over the set of all measurements. Such a measurement is deemed optimal.

In  Deutsch's problem for single bit, the relevant HSP is as follows. The group $G = \mathbb{Z}/(2)$.  The response space is $X = \mathbb{Z}/(2)$. Possible hidden subgroups are: $H_0 := G $ and $H_1 := \{0\}$. Constant functions hide the subgroup $H_0$ and  balanced functions hide $H_1$. The system is:
 \begin{equation*}
 \mathscr{H} :=  \mathbb{C}[G]\otimes\mathbb{C}[X] = \mathbb{C}[\mathbb{Z}/(2)]\otimes\mathbb{C}[\mathbb{Z}/(2)]
 \end{equation*}
State of the system after querying the oracle implementing a function $f$ in the standard method is given by \eqref{sms}. Since we are interested in measurements on the query register $\mathbb{C}[G]$,  we need only consider the reduced density operator  for the query register, as shown in Nielsen and Chuang~\cite{nc:qcqi}. This is found by taking the partial trace over the response register $\mathbb{C}[X]$: $\text{tr}_{_{\mathbb{C}[X]}}\left( \textbf{O}^{\phantom\dag}_f\vert \Psi_s \rangle \langle \Psi_s \vert \textbf{O}^\dag_f \right)$. If $f$ is constant, i.e., hides $H_0$, the mixed state (reduced density operator on the query register) is: 
\begin{equation*}
{\rho_s}_{_0} :=  \vert + \rangle \langle + \vert 
\end{equation*}
If $f$ is balanced, i.e., hides $H_1$, the mixed state is:
\begin{equation*}
{\rho_s}_{_1} :=   \frac{1}{2} \big(\vert + \rangle  \langle + \vert   \:\:+\:\: \vert - \rangle \langle - \vert\big) 
\end{equation*}

Bacon and Decker~\cite{bd:opscm} derive the optimal measurement for the  standard method single-query HSP for a finite group $G$,  assuming that there is a uniform prior probability on the set of all  subgroups of $G$ of being  hidden by the oracle function.  To underscore the issue, they show at the outset that  the aptly named Pretty Good Measurement (PGM)~\cite{cmar:pgm}, which is optimal in this sense for several important cases,  is sub-optimal for Deutsch's problem. 
The PGM, denoted by $\mathscr{M} = \{M_k\}_{k\in\mathscr{K}}$,  for the general states setting is defined as: 
\begin{align} \label{pgmdef}
M_k &:= p_k \:\: \rho^{-1/2} \rho_k \rho^{-1/2}, \quad \rho := \sum_{k \in \mathscr{K}} p_k \:\: \rho_k \nonumber \\
{\rho}^{-1/2} &:= \left(\rho\mid_{\text{Im}(\rho)}\right)^{-1/2}\oplus \textbf{I}_{\text{Ker}(\rho)}
\end{align}

For Deutsch's problem using the standard method with the assumption of uniform prior probability on the subgroups, the PGM, denoted by $\mathscr{M}_s = \{{M_s}_{_0}, {M_s}_{_1}\}$, becomes:
\begin{equation*}
{M_s}_{_0} =   \frac{2}{3}\vert + \rangle \langle + \vert  \quad \text{and} \quad  {M_s}_{_1} =   \frac{1}{3} \vert + \rangle  \langle + \vert   \:\:+\:\: \vert - \rangle\langle - \vert
\end{equation*}
which has the success probability $S_{d} = 2/3$ by \eqref{sucs}, less than the $3/4$ obtained by the standard method above.  The measurement for Deutsch's problem (defined in section \ref{subsection:pk} as the measurement on the query register for both the standard method and the phase kickback trick) written as a POVM, is:
\begin{equation*}
E_0 := \vert + \rangle \langle + \vert  \quad \text{and} \quad  E_1 :=  \vert - \rangle  \langle - \vert 
\end{equation*}
Let us take a look at  the phase kickback trick for Deutsch's problem, and compute the mixed states for this case using \eqref{pks}. If $f$ is constant, i.e., hides $H_0$, the mixed state is: 
\begin{equation*}
{\rho_p}_{_0}:=  \vert + \rangle \langle + \vert 
\end{equation*}
If $f$ is balanced, i.e., hides $H_1$, the mixed state is:
\begin{equation*}
{\rho_p}_{_1} :=   \vert - \rangle\langle - \vert 
\end{equation*}
It is apparent that the PGM, denoted by  $\mathscr{M}_p = \{{M_p}_{_0}, {M_p}_{_1}\}$, for this set of mixed states is given by:
\begin{equation*}
{M_p}_{_0} =    \vert + \rangle \langle + \vert = E_0 \quad \text{and} \quad  {M_p}_{_1} =   \vert - \rangle  \langle - \vert = E_1 
\end{equation*}
which is the measurement for Deutsch's problem, yielding a success probability of $1$, certainly optimal.  This contrasts with the standard method for which the PGM was found to be sub-optimal.  By definition, the PGM  depends on the set  of mixed states  to be distinguished, which in turn  depends both on the  possible subgroup  that the oracle hides and the query presented to the oracle.  The question of optimality of the PGM, or for that matter any measurement,  is  perhaps better posed in that context. It is interesting  that the  particular measurement $\mathscr{M}_p$ which is optimal  for the phase kickback query also works well for the standard query and is in fact optimal. We can verify this by the construction of Bacon and Decker~\cite{bd:opscm}, or by the checkable optimality conditions of Yuen, Kennedy, and Lax~\cite{ykl:qar}. Under some assumptions, we find that a wider set of queries have shared optimal measurements.

\subsection{Query search} \label{subsection:qs}
We build upon the paradigm of state discrimination hitherto adopted.  Much of the literature on the subject considers HSP for  a finite group $G$. This is also what we assume. We impose no restrictions on the set of possible hidden subgroups, letting that set be some arbitrarily chosen set of subgroups of $G$.  We work with  the following reasonable assumptions:  there is a uniform prior  probability on the  given set of subgroups of being hidden by the oracle function, and similarly there is a uniform prior probability that the oracle implements a  function from the set of  functions hiding a particular subgroup. The former is granted in~\cite{cmar:pgm,bd:opscm,bcv:opd}, and the latter is within the spirit of the HSP where the value of a function gives no information about the hidden subgroup. Although in the definition of the HSP, the response space $X$ can be any finite set, in the literature and the instances with which we are familiar,  it is  some abelian group of finite cardinality. For example, Regev~\cite{ro:qclp} takes advantage of this structure in his reduction of the shortest vector in a lattice problem (SVP) to HSP over the dihedral group. Having this structure on $X$ allows us to look for ways in which it can be exploited  to enhance the success probability of an HSP algorithm.  It is consistent with the oracle action definition for the  general  HSP. We consider the queries that are in an equal superposition state over the group and  arbitrary in the slate (state of the response register $\mathbb{C}[X]$), referring to these as the equal superposition tensor product or ESTP queries.  We consider algorithms in which the  measurements  are restricted to  the query  register $\mathbb{C}[G]$.   

Previous works~\cite{cmar:pgm,bd:opscm,bcv:opd} investigating the optimality of measurements for the standard method have used the maximization of the probability of subgroup identification as a criterion. We extend the criterion of optimality to ESTP queries: to be optimal, a query  need  maximize the probability of subgroup identification over all measurements and over all ESTP queries. It turns out that a generalization of the query used for phase kickback has the highest success probability. We call such a   query a ``character query": its slate  is  a particular character of $X$.

In section \ref{section:genhsp} we  describe the context, explicitly state the assumptions and  the class of algorithms to be used, and review some background material. In section \ref{section:clq}   we motivate and develop the main result concerning the mixed state obtained after the oracle evaluation on a query, assuming the oracle hides a particular subgroup.  Perhaps somewhat curiously, this state does not depend on which abelian group $X$ is, and depends only on its cardinality, i.e., the dimension of the response register.  In section \ref{section:cq} we give definitions of the ``standard query" and the ``character query". We find that any measurement can be taken to be an element of the group   algebra $\mathbb{C}[G]$ acting by the right regular representation.  Further, the success probability of any query is linearly dependent on that of the character query. More importantly, besides the phase multiples of a constant query, all ESTP queries have identical optimal measurements. This allows us to  reuse  previously known optimal measurements, specially those from the vast literature on  the standard query based HSP.  In section \ref{section:oq} we prove that the character query has the maximum probability of success, and does strictly better than the standard query. This gives us an explicit example of a query that satisfies our optimality criterion. We then take another look  at  the Deutsch's problem for single bit.  In the process we re-explain why phase kickback works as well as it does for Deutsch's problem in the single bit case from a representation theory point of view, with a clue to other problems in the HSP category. In section \ref{section:consub} we derive the success probability when the family of subgroups consists of conjugates. We find an improvement over the success probability found by Moore and Russell~\cite{cmar:pgm}, and describe how this case  contrasts with  Deutsch's problem. Section \ref{section:conclusion} is the conclusion and discussion section.


\section{Background} \label{section:genhsp}

To begin, we define some terms that are relevant to this discussion. We have the following data: a  finite group $G$, a set  $\mathscr{S}=\{H_k\}_{k \in \mathscr{K}}$ of subgroups  of $G$ indexed by a finite set $\mathscr{K}$ of cardinality $ K := |\mathscr{K} |$. For a subgroup $H_k \in \mathscr{S}$, we denote its index  by $N_k := [G:H_k]$. The probability that the oracle function hides any particular subgroup $H_k \in \mathscr{S}$ is $1/K$.

We fix the response space $X$ to be some finite abelian group. Up to isomorphism, we can assume that $X = \mathbb{Z}/({p^{\alpha_1}_1})\times \ldots\times \mathbb{Z}/({p^{\alpha_m}_m})$ for some primes $p_i$ and some $m, \alpha_i \in \mathbb{Z}^{+}$. Let  $D := |X|$.  Then $D=p^{\alpha_1}_1\ldots p^{\alpha_m}_m$.  By necessity $D  \geq \max_{k \in \mathscr{K}}\{N_k\}$. Associated with each subgroup $H_k \in \mathscr{S}$, we have  the set $\mathscr{F}_k $ of all the oracle functions that   satisfy condition \eqref{ofunc} for the subgroup $H_k$.
\begin{equation*}
\mathscr{F}_k := \{f: G\rightarrow X \mid f \text{ is constant and distinct on the left cosets of } H_k\}
\end{equation*}
The probability that the oracle implements a function $f \in \mathscr{F}_k$ given that the function hides the subgroup $H_k$ is $1/|\mathscr{F}_k|$.  

We have a Hilbert space, $\mathscr{H}$, describing the composite system. It is the tensor product of the query register:  $\mathbb{C}[G]$, and the response register: $\mathbb{C}[X]$. 
\begin{equation*}
\mathscr{H} := \mathbb{C}[G]\otimes\mathbb{C}[X]
\end{equation*}

We introduce the computational basis of   $\mathbb{C}[G]$ consisting of the $\delta$ functions:  
\begin{equation*}
\delta_z(g) = \left\{
\begin{array}{lr} 
1, & g=z \\
0, & \text{otherwise}
\end{array}
\right .
\end{equation*}
We also write $\vert z \rangle := \delta_z$. Then an element $\vert \phi \rangle \in \mathbb{C}[G]$ has a unique expression as a sum $\vert \phi \rangle = \sum_{g\in G} \phi_g \vert g \rangle$. 

$\mathbb{C}[G]$ has an inner product:
\begin{equation*}
\langle \phi \vert \vartheta \rangle = \sum_{g \in G} \overline{\phi_g}\vartheta_g
\end{equation*}
where $\vartheta  = \sum_{g\in G} \vartheta_g \vert g \rangle$.

We  remind ourselves of translations of  $\mathbb{C}[G]$ in the manner of Goodman and Wallach~\cite{wall:sri}. We denote by   $L$  and $R$ the left and right translation representations of $G$ on $\mathbb{C}[G]$. On the element:
\begin{equation*}
\vert \phi \rangle = \sum_{g \in G} \phi_g\vert g \rangle \in \mathbb{C}[G]\text{,}
\end{equation*}
$z \in G$ acts by:
\begin{align} \label{rtact}
L(z)\vert \phi \rangle  &:= \sum_{g \in G} \phi_g\vert z g \rangle \\
R(z)\vert \phi \rangle  &:= \sum_{g \in G} \phi_g\vert gz^{-1} \rangle
\end{align}
This induces corresponding actions by the computational basis $\vert z \rangle \in \mathbb{C}[G]$. Extend   by linearity to  get the left and right  \textit{regular representations} of $\mathbb{C}[G]$ on itself, also denoted by $L$ and $R$:
\begin{equation*}  
L,R : \mathbb{C}[G] \rightarrow \text{End}(\mathbb{C}[G])
\end{equation*}
We have a similar description for    $\mathbb{C}[X]$ except that $X$ is abelian and we use ``$+$" to designate the group operation in $X$. We have an inner product on $\mathscr{H}$ (defined as for $\mathbb{C}[G]$) compatible with the tensor product structure. In the ensuing discussion, relevant inner products $\langle \cdot \vert \cdot \rangle$ and norms $\Vert  . \Vert$ induced by them will be inferred from the context. 

 An oracle implementing a function $f$  is described by the unitary operator $\textbf{O}_f$ which acts on a basis state  $\vert g\rangle \otimes\vert y\rangle \in \mathscr{H}$ by :
\begin{equation*}
\textbf{O}_f:\vert g\rangle \otimes\vert y\rangle \mapsto\vert g\rangle \otimes\vert y+f(g)\rangle 
\end{equation*}
Hence oracle evaluation of $f$ on the state of the query register $\mathbb{C}[G]$ is added to the response register $\mathbb{C}[X]$.

\begin{defn}
A \textit{query} is a unit norm state in $\mathscr{H}$, presented to the  oracle for evaluation. The set of queries is then: $\{\vert \Psi \rangle  \in \mathscr{H} :   \Vert \vert \Psi \rangle \Vert = 1\}$.
\end{defn}

Queries of interest to us are assumed to be in an equal superposition state over the group but arbitrary in the response register. We refer to them as the equal superposition tensor product (ESTP) queries. Denote this class of queries  $Q_0 \subset \mathscr{H}$.
\begin{equation*}
	 Q_0 := \{\frac{1}{\sqrt{| G |}}\bigg(\sum_{g\in G}\vert g\rangle \bigg) \otimes\vert v \rangle \ : \vert v \rangle\ \in \mathbb{C}[X] , \Vert \vert v \rangle\Vert = 1\}
\end{equation*}

We restrict the algorithms to those comprising  the following steps:
\begin{enumerate}[label=(\roman{*})]
\item \label{ists} Prepare a query  $\vert \Psi \rangle \in Q_0$.
\item Evaluate the oracle on $\vert \Psi \rangle$.
\item Measure the query register using a measurement (POVM) $\mathscr{E} = \{E_k\}_{k \in \mathscr{K}}$. Observe the outcome $k$ and decide upon the  corresponding  $H_k \in \mathscr{S}$ as the hidden subgroup.
\end{enumerate}

A query $\vert \Psi \rangle = \frac{1}{\sqrt{| G |}}\big(\sum_{g\in G}\vert g\rangle \big) \otimes\vert v \rangle \in Q_0$ is determined by the tensor factor $\vert v \rangle \in \mathbb{C}[X]$. This prompts the definition of  a \textit{slate}.

\begin{defn}
Let $\vert \Psi \rangle=\frac{1}{\sqrt{| G |}}\big(\sum_{g\in G}\vert g\rangle\big) \otimes\vert v \rangle$ be an ESTP query.  Its response register part $\vert v \rangle$  is the \textit{slate} of the query  $\vert \Psi \rangle$. 
\end{defn} 

We define the set of slates $S_0 := \{\vert v \rangle \in \mathbb{C}[X] : \Vert\vert v \rangle \Vert_{X} =1\}$. By definition, $S_0$ can be identified with $Q_0$:
\begin{equation} \label{qsident}
\iota_{_{S_0}} : S_0 \leftrightarrow Q_0, \quad\vert v \rangle \mapsto  \frac{1}{\sqrt{| G |}}\bigg(\sum_{g\in G}\vert g\rangle \bigg) \otimes\vert v \rangle
\end{equation}

Given this identification, we will refer to queries $\vert \Psi \rangle= \iota_{_{S_0}}\vert v \rangle = \frac{1}{\sqrt{| G |}}(\sum_{g\in G}\vert g \rangle)\otimes\vert v \rangle$ by their  slate  $\vert v \rangle$ and vice versa.

We recall the Fourier transform on $X$:
 \begin{equation} \label{fourx}
\mathcal{F}_X: \vert y \rangle \mapsto \frac{1}{\sqrt{D}}\sum_{x \in X} \omega^{y.x}\vert x \rangle =:\vert \omega^{-y} \rangle
\end{equation}
where $ y=(y_j), x=(x_j) \in X$, $\omega =(\omega_j)$,  $\omega_j=e^{i2\pi/{p^{\alpha_j}_j}}$ and $\omega^{y.x} = \omega^{ y_1 x_1}_1 \ldots \omega^{y_m x_m}_k$.

A slate $\vert v \rangle$ can be written in terms of the characters,  $\vert \omega^{-y} \rangle $, of $X$.
\begin{equation} \label{initstrw}
\vert v \rangle =\sum_{y \in X}  \beta_{v,y}\vert \omega^{-y}\rangle  \quad \text{where}  \quad \beta_{v,y} := \langle  \omega^{-y} \vert v \rangle
\end{equation}


\section{Subgroup States} \label{section:clq}
States  arising from  functions $f  \in \mathscr{F}_k$ (constant and distinct on the left cosets of $H_k$) are  in some sense described by  the same subgroup $H_k$ and a reasonable measurement should target that subgroup.  Before we make this precise, let us consider the state  after the oracle implementing some  function  $f$ (not necessarily in $ \mathscr{F}_k$) has acted on a query $\vert \Psi \rangle = \frac{1}{\sqrt{| G |}}\bigg(\sum_{g\in G}\vert g\rangle \bigg) \otimes\vert v \rangle \in Q_0$. 
\begin{equation*}
\textbf{O}_f\vert \Psi \rangle = \sum_{y \in X} \beta_{v,y}\frac{1}{\sqrt{| G |}}\bigg(\sum_{g\in G}\omega^{-y \cdot f(g)}\vert g\rangle \bigg) \otimes\vert \omega^{-y} \rangle 
\end{equation*}
where $\{\beta_{v,y}\}_{y \in X}$ are as in \eqref{initstrw}. We denote the mixed state of the query register by $\rho_{_{f,v}}$:
\begin{align} \label{qregst}
\rho_{_{f,v}} &:=  \text{tr}_{_{\mathbb{C}[X]}}\left( \textbf{O}^{\phantom\dag}_f\vert \Psi \rangle \langle \Psi \vert \textbf{O}^\dag_f \right) \nonumber \\
&=\frac{1}{|G|}\sum_{y \in X} |\beta_{v,y}|^2 \bigg( \sum_{g,g' \in G}\omega^{-y \cdot (f(g)-f(g'))}\vert  g  \rangle\langle g'  \vert \bigg)
\end{align}

Recall the definition of a measurement as relevant to our discussion.  A measurement (POVM) on the query register $\mathbb{C}[G]$ is described by a set of  operators  $\mathscr{E} := \{E_k \}_{k \in \mathscr{K}}$ where $E_k \in \text{End}(\mathbb{C}[G])$ and satisfy:
\begin{enumerate}[label=(\roman{*})]
\item $E_k \geq 0   \:\: \forall  k \in \mathscr{K}$
\item
$\sum_{k \in \mathscr{K}} E_k = \textbf{I}$
\end{enumerate}
Here $\textbf{I}$ is the identity operator on $\mathbb{C}[G]$. 

If the state of the query register is given by a density operator $\rho_{_G} \in \text{End}(\mathbb{C}[G])$, then the  outcome $k \in \mathscr{K}$ is observed with probability  $\text{tr} ( E_k \rho_{_G})$. By choice, the measurement operator $E_k$ is associated with the subgroup $H_k$, so that a measurement outcome $k$ corresponds to the subgroup  $H_k$.
 
Given a slate $\vert v \rangle$ and an oracle implementing the function  $f$, the mixed state of the query register  $\mathbb{C}[G]$ after oracle evaluation is  $\rho_{_{f,v}}$ \eqref{qregst}.  The probability of observing the outcome $k'$ is $\text{tr} ( E_{k'} \rho_{_{f,v}})$. Now assume that the oracle hides the subgroup $H_k$. Since all the $f  \in \mathscr{F}_k$  are assumed equally likely,  the probability of outcome  $k'$  given that the oracle hides the subgroup $H_k$,  is described by the following probability function:
\begin{equation*} 
\mu_{v,\mathscr{E}}(k'|k):= \frac{1}{|\mathscr{F}_k|} \sum_{f\in \mathscr{F}_k}  \:  \text{tr} \big( E_{k'} \rho_{_{f,v}}\big) = \text{tr} \bigg(E_{k' }\frac{1}{|\mathscr{F}_k|} \sum_{f\in \mathscr{F}_k}  \:  \rho_{_{f,v}}\bigg)
\end{equation*}
This leads us to the following notion:
\begin{defn}
A \textit{subgroup state} for the subgroup $H_k$ and slate $\vert v \rangle$ is a mixed state,  obtained by averaging $\{\rho_{_{f,v}}\}_{f \in \mathscr{F}_k} $  from \eqref{qregst}. Denote this state $\rho_{_{k,v}}$.
\begin{align} \label{sbgpst}
\rho_{_{k,v}} &:=  \frac{1}{|\mathscr{F}_k|} \sum_{f\in \mathscr{F}_k}  \:  \rho_{_{f,v}} 
\end{align}
\end{defn} 

$\mu_{v,\mathscr{E}}(k'|k)$ can be written as a function of the subgroup state as:
\begin{equation}  \label{probcorid}
\mu_{v,\mathscr{E}}(k'|k)= \text{tr} \big(E_{k' } \:  \rho_{_{k,v}}\big)
\end{equation}

A subgroup state aggregates the mixed states resulting from oracle functions hiding a particular subgroup into a single mixed state. A measurement aims to distinguish such subgroup states.

To be able to work with oracle functions, however, we must make a few identifications. By definition, an oracle function $f  \in \mathscr{F}_k$, hiding the subgroup $H_k \in \mathscr{S}$,  factors through quotient by $H_k$.  Denote the quotient map by $q_{_k}$ (fixed by the choice of $k \in \mathscr{K}$).
\begin{align*} 
q_{_k}:  G \rightarrow  G/H_k, \quad g \mapsto gH_k
\end{align*}

To enumerate the various sets consistently, we can  identify $X$ with the set $\{0,\ldots,D-1\} \subset \mathbb{N}$ as follows:
\begin{equation*}
\iota_X : X \leftrightarrow \{0,\ldots,D-1\}, \quad x=(x_j) \mapsto  \sum^{m}_{j=1}  x_{_{j}} \prod^{j-1}_{i=1}p^{\alpha_i}_i
\end{equation*}

 For $n \in \{0,\ldots,D-1\} $, define the set of ``first $n$ elements in $X$":
\begin{equation*}
X_n \:\: := \iota^{-1}_X(\{0,\ldots,n-1\})
\end{equation*}
 For each $k \in \mathscr{K}$, we can  identify the cosets $G/H_k$ with  $X_{N_k}$. Fix such an identification $\iota_{_k}$.
\begin{align} \label{identcoset}
\iota_{_k}: X_{N_k} \leftrightarrow  G/H_k
\end{align}

With these constructions in hand, $f  \in \mathscr{F}_k$  can be written as a composition:
\begin{equation} \label{ffactor}
 f = \tilde{f}_k \circ q_{_k} = \gamma \circ \iota^{-1}_k \circ q_{_k}
 \end{equation}
 as shown in figure ~\ref{gammadiag}, where $\tilde{f} : G/H_k \rightarrow X$ and $\gamma : X_{N_k} \rightarrow X$  are the unique maps such that the diagram commutes.

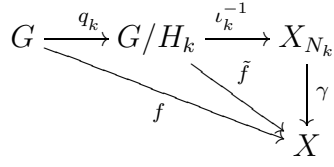
\begin{figure} [h]
\begin{displaymath} 
\xymatrix{ G \ar[r]^{q_{_k}} \ar[rrd]_{f} & G/H_k \ar[r]^{\iota^{-1}_k}  \ar[rd]^{\tilde{f}} &X_{N_k}  \ar[d]^{\gamma} \\
&&X}
\end{displaymath}
\caption{Oracle function $f$ factors}
\label{gammadiag}
\end{figure}

 It is then immediate that $\gamma$  is injective.  As $X_{N_k} \subseteq X$, $\gamma$  is also the restriction of some permutation of $X$  to  $X_{N_k}$. Let the set of such restrictions be given by the set $\Gamma_k$.
\begin{equation} \label{restrict}
\Gamma_k :=  \{ \sigma|_{X_{N_k}} : \sigma \in \text{S}_X\}
\end{equation}
 Denote by $\text{S}_X$ the group of permutations of $X$. Then for each $\gamma \in \Gamma_k$, the set:
 \begin{equation} \label{sgamma}
S_\gamma := \{ \sigma  \in \text{S}_X : \sigma|_{X_{N_k}} = \gamma\}
 \end{equation}
has cardinality $(D-N_k)!$.  Since such sets partition $\text{S}_X$, the number of possible oracle functions $f \in \mathscr{F}_k$ for every subgroup $H_k \in \mathscr{S}$ is:
\begin{equation*}
|\mathscr{F}_k| =  |\Gamma_k| = \frac{D!}{(D-N_k)!} 
\end{equation*}

With our specific factorization, an oracle function $f$ can be given as follows:
\begin{align} \label{ident}
f   \in \cup_{_{k \in \mathscr{K}}} \mathscr{F}_k &\leftrightarrow (k,\gamma) \in \mathscr{K}  \times \Gamma_k 
\end{align}
where $\gamma \in \Gamma_k $  is the unique map such that the diagram in figure ~\ref{gammadiag} commutes.

Our first result pertains to the form of the subgroup states $\{ \rho_{_{k,v}}\}_{k \in \mathscr{K}}$.

\begin{thm} \label{mainthm}
Let the subgroup hidden by the oracle be $H_k \in \mathscr{S}$. Let  $\vert v \rangle$ be a slate. Then the subgroup state $\rho_{_{k,v}}$ is given by an element  $\varphi_{_{k,v}} \in \mathbb{C}[G]$ acting by the right regular representation : 
\begin{align*}  
\rho_{_{k,v}} &= R(\varphi_{_{k,v}}) \nonumber \\
\varphi_{_{k,v}} &= |\beta_{v,0}|^{2} \vert \varphi_{_0} \rangle+ (1- |\beta_{v,0}|^{2})  \vert \varphi_{_{k,\bar0}} \rangle
\end{align*}
where $\beta_{v,0}= \langle  \omega^{0} \vert v \rangle$ is as defined in \eqref{initstrw}, and:
\begin{align*} 
\vert \varphi_{_0} \rangle &= \frac{1}{|G|}  \sum_{g \in G }\vert g \rangle \nonumber \\
 \vert \varphi_{_{k,\bar0}} \rangle &= \frac{1}{|G|} \bigg(  \frac{D}{(D-1)}\sum_{h \in H_k }\vert h \rangle -\frac{1}{(D-1)}\sum_{g \in G }\vert g \rangle  \bigg)   
\end{align*}

\end{thm}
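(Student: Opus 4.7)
The plan is to compute the matrix elements $\langle g | \rho_{_{k,v}} | g'\rangle$ directly, show that they depend only on whether $g^{-1}g' \in H_k$, and then recognize that such an operator is exactly $R(\psi)$ for a suitable $\psi \in \mathbb{C}[G]$, since $\langle g | R(\psi) | g' \rangle = \psi_{g^{-1}g'}$. From \eqref{qregst} and the definition \eqref{sbgpst} of $\rho_{_{k,v}}$, the matrix element takes the form
\begin{equation*}
\langle g | \rho_{_{k,v}} | g'\rangle \;=\; \frac{1}{|G|} \sum_{y \in X} |\beta_{v,y}|^2 \, A_{k,y}(g,g'), \qquad A_{k,y}(g,g') := \frac{1}{|\mathscr{F}_k|}\sum_{f\in\mathscr{F}_k}\omega^{-y\cdot(f(g)-f(g'))},
\end{equation*}
so the task reduces to evaluating $A_{k,y}(g,g')$.

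To evaluate $A_{k,y}$, I would use the factorization \eqref{ffactor} together with the identification \eqref{ident}, which recasts the average over $f\in\mathscr{F}_k$ as an average over $\gamma\in\Gamma_k$, i.e.\ over injections $G/H_k \hookrightarrow X$ (through $\iota_k$). Two cases arise. If $g^{-1}g' \in H_k$ then $q_{_k}(g)=q_{_k}(g')$, so $f(g)=f(g')$ for every $f\in\mathscr{F}_k$ and $A_{k,y}(g,g')=1$ for all $y$. If $g^{-1}g' \notin H_k$, then $(q_{_k}(g),q_{_k}(g'))$ is a pair of distinct cosets; by the symmetry of the injection ensemble (each ordered pair of distinct values in $X$ is attained by the same number, $(D-2)!/(D-N_k)!$, of $\gamma$'s), the pair $(f(g),f(g'))$ is uniformly distributed over $\{(x,x')\in X\times X : x\neq x'\}$. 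Therefore
\begin{equation*}
A_{k,y}(g,g') \;=\; \frac{1}{D(D-1)} \sum_{\substack{x,x'\in X \\ x\neq x'}} \omega^{-y\cdot(x-x')}.
\end{equation*}
Character orthogonality on the abelian group $X$ then gives $A_{k,0} = 1$ and, for $y \neq 0$, the unrestricted double sum factors as $\bigl|\sum_{x}\omega^{-y\cdot x}\bigr|^2 = 0$, so subtracting the diagonal yields $A_{k,y}(g,g') = -1/(D-1)$.

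Finally, using $\sum_{y\in X}|\beta_{v,y}|^2 = 1$ (unit-norm slate), the matrix element collapses to $1/|G|$ when $g^{-1}g'\in H_k$ and to $\bigl(|\beta_{v,0}|^2 - (1-|\beta_{v,0}|^2)/(D-1)\bigr)/|G|$ otherwise; a direct evaluation of the stated $\varphi_{_{k,v}}$ at $z = g^{-1}g'$ produces exactly these two values, whence $\rho_{_{k,v}} = R(\varphi_{_{k,v}})$ via $\langle g|R(\varphi_{_{k,v}})|g'\rangle = (\varphi_{_{k,v}})_{g^{-1}g'}$. The main obstacle is the symmetry step in the second paragraph: one must justify that averaging over $\mathscr{F}_k$ truly amounts to uniform averaging over ordered distinct pairs in $X$, which is what collapses the otherwise $g,g'$-dependent expression into a function of $g^{-1}g'$ alone and produces the clean two-term decomposition with weights $|\beta_{v,0}|^2$ and $1-|\beta_{v,0}|^2$.
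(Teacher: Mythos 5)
Your proposal is correct, and it reaches the theorem by a somewhat different route than the paper. The combinatorial heart is the same in both arguments: averaging over $\mathscr{F}_k$ sends each ordered pair of distinct cosets to a uniformly distributed ordered pair of distinct values in $X$, and the character sum over $X \times X \setminus \Delta(X)$ then yields the value $-\tfrac{1}{D-1}$ for $y \neq 0$ --- the paper gets the uniformity by lifting each injection $\gamma \in \Gamma_k$ to its $(D-N_k)!$ extensions in $\text{S}_X$ and counting $(D-2)!$ permutations per target pair, while you count injections directly, with the correct count $(D-2)!/(D-N_k)!$, so the step you flag as the main obstacle is in fact adequately justified. Where you genuinely diverge is in the packaging: the paper works at the operator level, rewriting $\rho_{f,v}$ in the coset-state basis via \eqref{ffactor} and \eqref{identcoset}, splitting the average into the operators $\check\rho_{_0}$ and $\check\rho_{_{k,\bar0}}$ of \eqref{varphily} and \eqref{varphibardef}, and then invoking commutation with the left translation \eqref{rtact} to conclude these lie in the image of the right regular representation; you instead compute the matrix elements $\langle g|\rho_{_{k,v}}|g'\rangle$ in the group basis, observe they depend only on whether $g^{-1}g' \in H_k$, and read off $\varphi_{_{k,v}}$ from the kernel identity $\langle g|R(\psi)|g'\rangle = \psi_{g^{-1}g'}$, which makes the invariance argument unnecessary. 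Your route is more elementary and self-contained for the theorem as stated; the paper's route has the advantage that the intermediate objects $\check\rho_{_0}$ and $\check\rho_{_{k,\bar0}}$ (the projection onto the equal superposition and the coset-space combination) are exactly what get reused in Corollaries \ref{optquery} and \ref{charquery}, where the decomposition of $\rho_{_{k,v}}$ into these two pieces, rather than just the final kernel values, drives the success-probability computations.
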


\begin{proof}
We begin by defining  the coset state for $gH_k \in G/H_k$:
\begin{equation*} 
\vert gH_k \rangle :=\frac{1}{\sqrt{| H_k |}}\sum_{h\in H_k}\vert gh\rangle
\end{equation*}
For an oracle function $f  \in \mathscr{F}_k$, we rewrite the mixed  state of the query register \eqref{qregst} in terms of the coset states, taking account of the oracle function  property that  it  factors through  quotient by $H_k$. Hence, using  \eqref{ffactor} we can write:
\begin{equation*} 
\rho_{_{f,v}} =\frac{1}{N_k}\sum_{y \in X} |\beta_{v,y}|^2  \sum_{c,c' \in G/H_k}\omega^{-y \cdot (\tilde{f}(c)-\tilde{f}(c'))}\vert  c  \rangle\langle c'  \vert 
\end{equation*}
Under the identifications \eqref{identcoset} and \eqref{ident}:  $f \leftrightarrow (k,\gamma)$. We can recast $\rho_{_{f,v}}$ as:
\begin{equation*}
\rho_{_{f,v}} = \frac{1}{N_k}\sum_{y \in X} |\beta_{v,y}|^2  \sum_{r,r' \in X_{N_k}}\omega^{-y \cdot (\gamma(r)-\gamma(r'))}\vert  \iota_{_k}(r)  \rangle\langle \iota_{_k}(r')  \vert 
\end{equation*}
Averaging over all $\gamma \in  \Gamma_k$ from \eqref{restrict}, results in the subgroup state $\rho_{_{k,v}}$ \eqref{sbgpst}.
\begin{align*} \label{rholosiggg}
\rho_{_{k,v}} &= \frac{(D-N_k)!}{D!} \sum_{\gamma \in \Gamma_k} \rho_{_{f,v}} \nonumber \\
		&= \sum_{y \in X}  |\beta_{v,y}|^2 \frac{(D-N_k)!}{D!}  \sum_{\gamma \in \Gamma_k} \frac{1}{{N_k}} \sum_{r,r' \in X_{N_k}}\omega^{-y \cdot (\gamma(r)-\gamma(r'))}\vert  \iota_{_k}(r)  \rangle\langle \iota_{_k}(r') \vert
\end{align*}
Since the sets $S_\gamma$ \eqref{sgamma} all have the cardinality $(D-N_k)!$,  we can average over $\text{S}_X$ instead of $\Gamma_k$. Then:
\begin{align*} 
\rho_{_{k,v}} &= \sum_{y \in X}  |\beta_{v,y}|^2\bigg(\frac{1}{D!} \sum_{\sigma \in \text{S}_X} \frac{1}{{N_k}} \sum_{r,r' \in X_{N_k}}\omega^{-y \cdot (\sigma(r)-\sigma(r'))}\vert  \iota_{_k}(r)  \rangle\langle \iota_{_k}(r') \vert \bigg)
\end{align*}

For $y \in X$, we define:
\begin{equation*}  
\check\rho_{_{k,y}} =\frac{1}{D!} \sum_{\sigma \in \text{S}_X} \frac{1}{{N_k}} \sum_{r,r' \in X_{N_k}}\omega^{-y \cdot (\sigma(r)-\sigma(r'))}\vert  \iota_{_k}(r)  \rangle\langle \iota_{_k}(r') \vert
\end{equation*}
Then we write $\rho_{_{k,v}}$ as:
\begin{equation}  \label{rholosig}
\rho_{_{k,v}} =\sum_{y \in X}  |\beta_{v,y}|^2\check\rho_{_{k,y}}
\end{equation}
We consider the operators $\check\rho_{_{k,y}}$ above for $y \in X$.  When $y = 0$:
\begin{equation} \label{varphily}
\check\rho_{_{k,0}} =\bigg(\frac{1}{\sqrt{| G |}}\sum_{g\in G}\vert g\rangle\bigg)  \bigg(\frac{1}{\sqrt{| G |}}\sum_{g\in G}\langle g  \vert \bigg) 
\end{equation}
which is simply the  projection on the  equal superposition state in $ \mathbb{C}[G]$.

When $y \neq 0$, we find that:
\begin{align*} 
\check\rho_{_{k,y}} &=\frac{1}{D!} \sum_{\sigma \in \text{S}_X} \frac{1}{N_k} \bigg( \sum_{r \in X_{N_k}}\vert  \iota_{_k}(r)  \rangle \langle \iota_{_k}(r)  \vert  +  \sum_{r \neq r' \in X_{N_k}} \omega^{-y \cdot (\sigma(r)-\sigma(r'))}\vert  \iota_{_k}(r)  \rangle \langle \iota_{_k}(r')  \vert \bigg) \nonumber \\
&= \frac{1}{N_k} \Bigg( \sum_{r \in X_{N_k}}\vert  \iota_{_k}(r)  \rangle \langle \iota_{_k}(r)  \vert  +  \sum_{r \neq r' \in X_{N_k}}\vert  \iota_{_k}(r)  \rangle \langle \iota_{_k}(r')  \vert  \bigg(\frac{1}{D!}   \sum_{\sigma \in \text{S}_X} \omega^{-y \cdot (\sigma(r)-\sigma(r'))}\bigg)  \Bigg)  
\end{align*}
The set $\{ r \neq r' \in X_{N_k}\}$ can be written as $X_{N_k}\times X_{N_k}\setminus \Delta(X_{N_k})$, where $\Delta$ denotes the diagonal map defined for any set $S$ as:
\begin{equation*} 
\Delta: S \rightarrow S \times S, \quad s \mapsto (s,s)
\end{equation*}
We compute the sum  in the inside bracket in the expression for $\check\rho_{_{k,y}}$ above. For a pair $(r,r') \in X_{N_k}\times X_{N_k}\setminus \Delta(X_{N_k})$, define the set  $\Omega_{(r,r')}$ of pairs obtained by evaluating  permutations in $\text{S}_X$  on the pair $(r,r')$.
\begin{equation*}
\Omega_{(r,r')} := \{(\sigma(r),\sigma(r')) \:\: : \:\: \sigma \in \text{S}_X\} \subseteq  X \times X \setminus \Delta(X)
\end{equation*}
We make two observations. Firstly, $\Omega_{(r,r')} = X \times X \setminus \Delta(X)$.  Secondly, for every pair $(s,s') \in X \times X \setminus \Delta(X)$ the set $\{ \sigma  \in \text{S}_X : (\sigma(r),\sigma(r')) = (s,s')\}$ has cardinality $(D-2)!$, and such sets partition $\text{S}_X$. They imply:
\begin{align*} 
\frac{1}{D!}   \sum_{\sigma \in \text{S}_X} \omega^{-y \cdot (\sigma(r)-\sigma(r'))} &= \frac{(D-2)!}{D!}   \sum_{(s,s') \in \Omega_{(r,r')}} \omega^{-y \cdot (s-s')}   \nonumber \\
 &=\frac{(D-2)!}{D!}   \sum_{(s,s') \in X \times X \setminus \Delta(X)} \omega^{-y \cdot (s-s')}  \nonumber \\
&=  \frac{(D-2)!}{D!} \sum_{x \in  X\setminus\{0\}} D \omega^{-y \cdot x} \nonumber \\
 &=  -\frac{(D-2)!}{D!} D \nonumber \\
&=   -\frac{1}{(D-1)} \nonumber\\
\end{align*}
where the second to last  equality is straightforward from the fact that any nontrivial irreducible character of a finite group sums  to $0$.
We continue with $\check\rho_{_{k,y}}$ simplification:
\begin{align*}  
\check\rho_{_{k,y}} &= \frac{1}{N_k} \bigg( \sum_{r \in X_{N_k}}\vert  \iota_{_k}(r)  \rangle \langle \iota_{_k}(r)  \vert    -\frac{1}{(D-1)}\sum_{r \neq r' \in X_{N_k}}\vert  \iota_{_k}(r)  \rangle \langle \iota_{_k}(r')  \vert    \bigg) \\
 &= \frac{1}{N_k} \bigg( \frac{D}{(D-1)}\sum_{r \in X_{N_k}}\vert  \iota_{_k}(r)  \rangle \langle \iota_{_k}(r)  \vert    -\frac{1}{(D-1)}\sum_{r , r' \in X_{N_k}}\vert  \iota_{_k}(r)  \rangle \langle \iota_{_k}(r')  \vert    \bigg)
\end{align*}
With the identification in \eqref{identcoset}, this becomes:
\begin{align*}  
\check\rho_{_{k,y}}  &= \frac{1}{N_k} \bigg( \frac{D}{(D-1)}\sum_{c \in G/H_k}\vert  c  \rangle \langle c  \vert    -\frac{1}{(D-1)}\sum_{c , c' \in G/H_k}\vert  c  \rangle \langle c' \vert    \bigg)
\end{align*}
It is worth noticing in the above expression that:
\begin{equation*}  
\sum_{c \in G/H_k}\vert  c  \rangle \langle c \vert 
\end{equation*}
is the projection on the span of coset states associated with $H_k$, and 
\begin{equation*}  
\sum_{c , c' \in G/H_k}\vert  c  \rangle \langle c' \vert  = N_k \bigg( \frac{1}{\sqrt{| G |}}\sum_{g\in G}\vert g\rangle\bigg)  \bigg(\frac{1}{\sqrt{| G |}}\sum_{g\in G}\langle g  \vert \bigg) = N_k \: \check\rho_{_{k,0}}
\end{equation*}.
where $\check\rho_{_{k,0}}$ is as in \eqref{varphily}.

We see that  $\check\rho_{_{k,y}}$ is independent of $y$, and $\check\rho_{_{k,0}}$ is also independent of $k$. Consequently, we define:
\begin{align}  \label{varphibardef}
\check\rho_{_0} &:=\bigg( \frac{1}{\sqrt{| G |}}\sum_{g\in G}\vert g\rangle\bigg)  \bigg(\frac{1}{\sqrt{| G |}}\sum_{g\in G}\langle g  \vert \bigg)  \nonumber \\
\check\rho_{_{k,\bar0}} &:= \frac{1}{N_k} \bigg( \frac{D}{(D-1)}\sum_{c \in G/H_k}\vert  c  \rangle \langle c  \vert    -\frac{1}{(D-1)}\sum_{c , c' \in G/H_k}\vert  c  \rangle \langle c' \vert    \bigg) 
\end{align}
from which follows that we can write $\rho_{_{k,v}}$ in  \eqref{rholosig} as:
\begin{align*} 
\rho_{_{k,v}} &=\sum_{y \in X}  |\beta_{v,y}|^2\check\rho_{_{k,y}}  \nonumber \\
&= |\beta_{v,0}|^{2} \check\rho_{_0} +  (1- |\beta_{v,0}|^{2}) \check\rho_{_{k,\bar0}}
\end{align*}

One readily checks that the operators $\check\rho_{_0} $ in  \eqref{varphily} and $\check\rho_{_{k,\bar0}} $  in \eqref{varphibardef} commute with the left translation \eqref{rtact}:
\begin{align*} 
L^{-1}(z) \check\rho_{_0} L({z})  &= \check\rho_{_0} \\
L^{-1}(z) \check\rho_{_{k,\bar0}} L({z}) & = \check\rho_{_{k,\bar0}}   \quad \forall  z \in G
\end{align*} 
which implies that  $\check\rho_{_0} $ and $\check\rho_{_{k,\bar0}} $ are given by elements of  $\mathbb{C}[G]$ acting by the right regular representation, found by evaluating the two operators at the identity of $\mathbb{C}[G]$. We write:
\begin{align*}  
\check\rho_{_0} &= R(\vert \varphi_{_0} \rangle) \nonumber \\
\check\rho_{_{k,\bar0}} &= R( \vert \varphi_{_{k,\bar0}} \rangle)  
\end{align*}
where: 
\begin{align*} 
\vert \varphi_{_0} \rangle &= \frac{1}{|G|}  \sum_{g \in G }\vert g \rangle \nonumber \\
 \vert \varphi_{_{k,\bar0}} \rangle &= \frac{1}{|G|} \bigg(  \frac{D}{(D-1)}\sum_{h \in H_k }\vert h \rangle -\frac{1}{(D-1)}\sum_{g \in G }\vert g \rangle  \bigg)   
\end{align*}
\end{proof}


\section{The Character Query and Optimal Measurements} \label{section:cq}

To compare ESTP queries, we would like to quantify a query by  the highest success probability it can achieve with any measurement. In that connection, we denote by $\mathscr{E}_G$ the set of all POVMs on the query register:  $\mathscr{E}_G := \{ \mathscr{E} :  \mathscr{E} \:\: \text{a POVM on } \mathbb{C}[G]\}$.

Fix a  slate $\vert v \rangle$.  Given that the oracle hides the subgroup  $H_k$, the probability of correctly identifying it using a measurement $\mathscr{E} = \{E_k\}_{k \in \mathscr{K}}$ is $\mu_{v,\mathscr{E}}(k|k) =   \text{tr}  \big(E_{k } \:  \rho_{_{k,v}}\big)$ by \eqref{probcorid}. All the subgroups in the set of subgroups $\mathscr{S}$ are equally likely to be hidden by the oracle, with a uniform prior probability over $\mathscr{S}$. So  the probability of successful subgroup identification  denoted $ S_{v,\mathscr{E}}$ is:
\begin{equation*}
 S_{v,\mathscr{E}} = \frac{1}{K} \sum_{k \in \mathscr{K}}  \mu_{v,\mathscr{E}}(k|k) = \frac{1}{K} \sum_{k \in \mathscr{K}} \: \text{tr}  \big(E_{k } \:  \rho_{_{k,v}}\big)
 \end{equation*}
This is exactly the probability of successful state discrimination in \eqref{sucs}   in which the  measurement $\mathscr{E} = \{E_k\}_{k \in \mathscr{K}}$ is used to  distinguish the subgroup states $\{\rho_{_{k,v}}\}_{k \in \mathscr{K}}$ distributed with prior probability $\{p_k\}_{k \in \mathscr{K}}$ given by $p_k = 1/K$ (section \ref{subsection:hspsd}). As $\mathscr{E}$ varies over  $\mathscr{E}_G$, we get  a function $S_{v}$ on the set of POVMs $\mathscr{E}_G$, giving the probability of successful subgroup identification.
\begin{defn}
The \textit{success probability} of a slate $\vert v \rangle$, denoted by  $S_{v}$, is a function on  $\mathscr{E}_G$:
\begin{equation*} 
\begin{array}{rcl}
S_{v}: \mathscr{E}_G&\longrightarrow& [0,1] \nonumber \\
\mathscr{E} = \{E_k\}_{k \in \mathscr{K}} &\longmapsto  &S_{v,\mathscr{E}} = \frac{1}{K} \sum_{k \in \mathscr{K}} \: \text{tr}  \big(E_{k } \:  \rho_{_{k,v}}\big)
\end{array}
\end{equation*}	
\end{defn} 
We call $S_{v}(\mathscr{E})$ the success probability of the slate $\vert v \rangle$ for the measurement $\mathscr{E}$.

To see the  topological structure of  $\mathscr{E}_G$, a useful alternate description of a  measurement  is:
\begin{equation*}
\mathscr{E} := (E_k)_{k \in \mathscr{K}} \in {\text{End}(\mathbb{C}[G])}^{\times K}
\end{equation*}
where the $\{E_k \}_{k \in \mathscr{K}}$ satisfy the conditions in section \ref{section:clq}.  $\text{End}(\mathbb{C}[G])$ is  a finite dimensional Hilbert space. Give ${\text{End}(\mathbb{C}[G])}^{\times K}$  the product topology. Then $\mathscr{E}_{_G}$ is a compact subset of ${\text{End}(\mathbb{C}[G])}^{\times K}$. Given a slate $\vert v \rangle$, $S_{v}$ is continuous. Hence we can define the maximum probability, over $\mathscr{E}_{_G}$, of correctly determining a hidden subgroup with  a slate  $\vert v \rangle$.
\begin{defn}
The \textit{optimum success probability} of  a slate  $\vert v \rangle$, denoted by $\hat{S}_{v}$, is:
\begin{align*}
	 \hat{S}_{v}:=\max_{\mathscr{E} \in \mathscr{E}_{_G}}\{S_{v}(\mathscr{E})\}
\end{align*}
\end{defn} 

By the definition of  $ \hat{S}_{v}$, there  is some measurement that achieves it.  Such measurements may not be unique. 
\begin{defn}
A measurement $\mathscr{E} \in \mathscr{E}_G$ is an \textit{optimal measurement}  for a slate  $\vert v \rangle$  if  $S_{v}(\mathscr{E}) = \hat{S}_{v}$.
\end{defn} 

We will also need a basic result from the representation theory of finite groups~\cite{wall:sri}. This is summarized in Appendix \ref{apprt}. Let $\hat G$ be the equivalence classes of irreducible unitary representations of $G$. Fix a representation $(\pi^{\lambda}, V^{\lambda})$ in the class $\lambda$ for each $\lambda \in \hat G$. Its dual representation is denoted by $(\pi^{\lambda^*}, V^{\lambda^*})$. Let the dimension of $V^{\lambda}$ be $d_{_\lambda}$.  Then the right translation \eqref{rtact} is isomorphic to:
\begin{equation} \label{rtrrep}
R(z) \cong  \bigoplus_{\lambda \in \hat{G}} \textbf{I}_{V^{{\lambda}^{*}}}\otimes \pi^{\lambda}(z)
\end{equation}
where $\textbf{I}_{V^{\lambda^*}}$ is the  identity operator on the space  ${V^{\lambda^*}}$.
We denote the trivial representation  by  $(\pi^0, V^0)$.

 Next, we define some specific queries that will be needed in the computation of  success probabilities.

\begin{defn} The character query, the standard query, and the  constant query, are defined by their slates and the identification $\iota_{_{S_0}}$ in \eqref{qsident}.
\begin{equation*}
\begin{array} {lrclrcl}
\text{The }  constant \:\: query:&\vert \Psi_0  \rangle &:=& \iota_{_{S_0}}\vert v_0 \rangle,&  \vert v_0 \rangle&:=&\vert \omega^{(0,0,\ldots,0)} \rangle \\
\\
\text{The } standard \:\: query:&\vert \Psi_s \rangle &:=& \iota_{_{S_0}}\vert v_s \rangle,& \vert v_s \rangle&:=& \vert (0,0,\ldots,0) \rangle \\
\\
\text{The } character  \:\: query:&\vert \Psi_c\rangle &:=& \iota_{_{S_0}}\vert v_c \rangle,&\vert v_c \rangle&:=& \vert \omega^{(-1,0,\ldots,0)} \rangle 
\end{array}
\end{equation*}
where  $\vert  \omega^{-y} \rangle$ is as in \eqref{fourx}.
\end{defn}

We express the success probability of any given ESTP query as a function of  that of the character query, and show that the optimal measurements are common to almost all the queries.  

\begin{cor}   \label{optquery}  
 \begin{enumerate}[label=(\roman{*})] 
\item \label{measrr} For any  measurement  $\mathscr{U} = \{U_k\}_{k \in \mathscr{K}}$,  there is  a measurement $\mathscr{E} = \{E_k\}_{k \in \mathscr{K}}$   of the form:
\begin{equation*} 
E_{k} \cong  \bigoplus_{\lambda \in \hat{G}} \textbf{I}_{V^{{\lambda}^{*}}}\otimes E^{\lambda}_{k}, \quad \quad E^{\lambda}_{k} \in \begin{rm}{End}\end{rm}(V^{\lambda})
\end{equation*}
hence given by elements of $\mathbb{C}[G]$ acting by the right regular representation, such that $\mathscr{U}$ and $\mathscr{E}$ have the same conditional probabilities in~\eqref{probcorid}. That is:
\begin{equation*}
\mu_{v,\mathscr{E}}(k'|k)=\mu_{v,\mathscr{U}}(k'|k) \quad \forall   \vert v \rangle \in S_0,  \: k,k' \in \mathscr{K}
\end{equation*}
In particular, $S_{v}(\mathscr{U}) = S_{v}(\mathscr{E})  \:\: \forall   \vert v \rangle \in S_0$. 

\item  \label{optmeas} Let $\vert v \rangle$ be a slate. Given a measurement $\mathscr{E}  \in \mathscr{E}_G$, the success probability of  $\vert v \rangle$ for $\mathscr{E}$, $S_{v}(\mathscr{E})$, is:
\begin{align*} 
	S_{v}(\mathscr{E})&= \frac{|\beta_{v,0}|^{2} }{K} +   (1-|\beta_{v,0}|^{2})  S_{v_c}(\mathscr{E}) 
 \end{align*}
In particular, if a  measurement  is optimal  for some slate  $\vert v \rangle$ such that $|\beta_{v,0}| \neq 1$ $(\vert v \rangle \notin \{e^{i\theta}\vert v_0 \rangle : \theta \in \mathbb{R}\})$, then it is optimal for every  slate.
\end{enumerate}
\end{cor}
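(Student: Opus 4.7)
The plan is to exploit two structural facts about the subgroup states supplied by Theorem~\ref{mainthm}: each $\rho_{_{k,v}}$ lies in the image of the right regular representation $R$, and the underlying operators $\check\rho_{_0}$ and $\check\rho_{_{k,\bar0}}$ commute with every left translation $L(z)$. Part~\ref{measrr} will follow from this commutation by symmetrizing an arbitrary POVM over the left regular representation, while part~\ref{optmeas} will follow by recognizing the decomposition of Theorem~\ref{mainthm} as a convex combination of $\rho_{_{k,v_0}}$ and $\rho_{_{k,v_c}}$.

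For part~\ref{measrr}, given an arbitrary POVM $\mathscr{U} = \{U_k\}_{k \in \mathscr{K}}$, I would set
\begin{equation*}
E_k := \frac{1}{|G|}\sum_{z \in G} L(z)\, U_k\, L(z)^{-1},
\end{equation*}
verify the POVM conditions (positivity from each summand being a unitary conjugate of $U_k \geq 0$, and $\sum_k E_k = \textbf{I}$ from $L(z)\textbf{I}L(z)^{-1} = \textbf{I}$), and note that each $E_k$ commutes with $L(z)$ for all $z$ by construction. The isomorphism~\eqref{rtrrep} together with Schur's lemma then forces $E_k$ into the block-diagonal form $\bigoplus_{\lambda \in \hat{G}} \textbf{I}_{V^{\lambda^*}} \otimes E^{\lambda}_k$. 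Matching conditional probabilities is a single trace computation: by cyclicity and the identity $L(z)^{-1}\rho_{_{k,v}}L(z) = \rho_{_{k,v}}$ (the commutation property recorded at the end of the proof of Theorem~\ref{mainthm}),
\begin{equation*}
\text{tr}(E_k \rho_{_{k,v}}) = \frac{1}{|G|}\sum_{z} \text{tr}\bigl(U_k\, L(z)^{-1} \rho_{_{k,v}} L(z)\bigr) = \text{tr}(U_k \rho_{_{k,v}}).
\end{equation*}

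For part~\ref{optmeas}, the key step is to identify the two pieces in Theorem~\ref{mainthm}'s formula with the subgroup states of the two distinguished slates. The constant slate gives $|\beta_{v_0,0}| = 1$, so $\rho_{_{k,v_0}} = R(\vert \varphi_{_0} \rangle)$ is the rank-one projection onto the equal superposition state and is \emph{independent of $k$}. The character slate gives $\beta_{v_c,0} = \langle \omega^{0} \vert \omega^{(-1,0,\ldots,0)}\rangle = 0$ by orthogonality of distinct characters of $X$, so $\rho_{_{k,v_c}} = R(\vert \varphi_{_{k,\bar0}} \rangle)$. Substituting into Theorem~\ref{mainthm} then gives $\rho_{_{k,v}} = |\beta_{v,0}|^{2}\rho_{_{k,v_0}} + (1-|\beta_{v,0}|^{2})\rho_{_{k,v_c}}$ for every $k$, and the contribution from the $\rho_{_{k,v_0}}$ term averages, via $\sum_k E_k = \textbf{I}$, to $\frac{1}{K}\text{tr}(\rho_{_{k,v_0}}) = \frac{1}{K}$, giving the claimed identity for $S_v(\mathscr{E})$.

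For the ``in particular'' assertion, when $|\beta_{v,0}| \neq 1$ the identity makes $S_v$ an affine function of $S_{v_c}$ with strictly positive slope, so a measurement optimal for $v$ must attain $\hat S_{v_c}$. Applying the identity again with an arbitrary slate $\vert w \rangle$ shows that such a measurement also attains $\frac{|\beta_{w,0}|^{2}}{K} + (1-|\beta_{w,0}|^{2})\hat S_{v_c}$, which is by definition $\hat S_w$. I do not foresee any serious obstacle; the only items demanding a little care are the positivity and normalization of the averaged POVM in part~\ref{measrr} and the correct identification of $\vert \varphi_{_0}\rangle$ and $\vert \varphi_{_{k,\bar0}}\rangle$ with $\rho_{_{k,v_0}}$ and $\rho_{_{k,v_c}}$ respectively. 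The substantive content was already packaged into Theorem~\ref{mainthm}; this corollary is essentially two short consequences of that decomposition.
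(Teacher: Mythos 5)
Your proposal is correct. Part~\ref{optmeas} and the optimality-transfer assertion follow the paper's own route: you split $\rho_{_{k,v}}$ into the convex combination supplied by Theorem~\ref{mainthm}, identify the second piece with $\rho_{_{k,v_c}}$ via $\beta_{v_c,0}=0$, and collapse the first piece to $1/K$ using $\sum_k E_k = \textbf{I}$ and $\operatorname{tr}\check\rho_{_0}=1$; in fact you spell out the ``in particular'' step (positive slope in $S_{v_c}(\mathscr{E})$, then re-applying the affine identity to an arbitrary slate) more explicitly than the paper does. Part~\ref{measrr} is where you genuinely diverge. The paper works directly with the isotypic decomposition of $\rho_{_{k,v}}$ and the multiplicativity of the trace on $\text{End}(V^{\lambda^*})\otimes\text{End}(V^{\lambda})$ to replace each block of $U_{k'}$ by $\textbf{I}_{V^{\lambda^*}}\otimes E^{\lambda}_{k'}$ (in effect a normalized partial trace over the multiplicity factor), but it never explicitly checks that the resulting family is again a POVM. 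Your construction $E_k = \frac{1}{|G|}\sum_{z} L(z)U_kL(z)^{-1}$ is the group-averaged realization of the same replacement: twirling over the left regular representation is exactly ``identity on $V^{\lambda^*}$ tensor the normalized partial trace,'' and the commutant of $L$ (Schur's lemma applied to the decomposition in the appendix) gives the required block form. What your route buys is that positivity and $\sum_k E_k=\textbf{I}$ are immediate, and the equality of conditional probabilities reduces to the left-invariance $L(z)^{-1}\rho_{_{k,v}}L(z)=\rho_{_{k,v}}$ recorded in the proof of Theorem~\ref{mainthm}, so the argument is self-contained; what the paper's route buys is that it never needs to introduce the averaging and reads the block form straight off the trace identity. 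Both are sound; your only loose end is cosmetic, namely citing \eqref{rtrrep} (the decomposition of $R$) where the commutant argument really uses the corresponding decomposition of $L$ from the appendix.
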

\begin{proof}

\ref {measrr}:  Suppose we are given a slate $\vert v \rangle$, and a  measurement $\mathscr{U} = \{U_k\}_{k \in \mathscr{K}}$. Then  $\mu_{v,\mathscr{U}}(k'|k) = \text{tr} ( U_{k'} \rho_{_{k,v}})$.   

Each operator $\rho_{_{k,v}}$ is isomorphic to a direct sum  by Theorem \ref{mainthm}   and   \eqref{rtrrep}. Define $\tilde{\rho}_{_{k,v}}$ as:
\begin{equation} \label{rholrtrep}
\rho_{_{k,v}} \cong  \bigoplus_{\lambda \in \hat{G}} \textbf{I}_{V^{{\lambda}^{*}}}\otimes \pi^{\lambda}(\varphi_{_{k,v}}) =: \tilde{\rho}_{_{k,v}}
\end{equation}

We have that: the trace of a linear operator on a finite dimensional vector space is invariant under vector space isomorphisms,  $\tilde{\rho}_{_{k,v}}$ is a direct sum, and $\text{End}(V^{{\lambda}^{*}}\otimes V^{\lambda}) \cong \text{End}(V^{{\lambda}^{*}})\otimes \text{End}(V^{\lambda})$. Thus,  there exists  $\tilde{E}_{k'} \in \bigoplus_{\lambda \in \hat{G}} \text{End}(V^{{\lambda}^{*}})\otimes \text{End}(V^{\lambda})$ such that :
$\text{tr} ( \tilde{E}_{k'} \tilde{\rho}_{_{k,v}}) = \text{tr} ( U_{k'} \rho_{_{k,v}})$. Trace,  
$
\text{tr}: \text{End}(V^{{\lambda}^{*}})\otimes \text{End}(V^{\lambda}) \rightarrow \mathbb{C}
$,  
has the property:
\begin{equation*}
\text{tr}(A\otimes B) = \text{tr} (A) \text{tr} (B)  \quad \forall  A \in \text{End}(V^{{\lambda}^{*}}), \:B \in \text{End}(V^{\lambda})
\end{equation*}
This,  together with  the form of the summands $\textbf{I}_{V^{{\lambda}^{*}}}\otimes \pi^{\lambda}(\varphi_{_{k,v}})$ in  \eqref{rholrtrep},  implies that  $\tilde{E}_{k'}$ can be chosen so that:
\begin{equation*} 
\tilde{E}_{k'} =  \bigoplus_{\lambda \in \hat{G}} \textbf{I}_{V^{{\lambda}^{*}}}\otimes E^{\lambda}_{k'}
\end{equation*}
for some  $E^{\lambda}_{k'} \in \begin{rm}{End}\end{rm}(V^{\lambda})$. Under the isomorphism  \eqref{rtrrep}, we can find $E_{k'}$ :
\begin{equation*} 
E_{k'}  \cong \:\: \tilde{E}_{k'}= \bigoplus_{\lambda \in \hat{G}} \textbf{I}_{V^{{\lambda}^{*}}}\otimes E^{\lambda}_{k'} 
\end{equation*}

\ref{optmeas}: Let $\mathscr{E} = \{E_k\}_{k \in \mathscr{K}}$.
Using Theorem \ref{mainthm}, we can write $S_{v}(\mathscr{E})$ as:

\begin{align} \label {Ppsi}
	S_{v}(\mathscr{E}) &=  \frac{1}{K} \sum_{k \in \mathscr{K}} \: \text{tr}  \big(E_{k } \:  \rho_{_{k,v}}\big) \nonumber \\
 &= |\beta_{v,0}|^{2} \frac{1}{K}   \sum_{k \in \mathscr{K}}  \text{tr}  \big(  E_{k}   \check\rho_{_0} \big) +  (1-|\beta_{v,0}|^{2})\frac{1}{K}   \sum_{k \in \mathscr{K}}  \text{tr}  \big(  E_{k}   \check\rho_{_{k,\bar0}} \big) 
\end{align}

Notice that the definition of  $\vert v_c \rangle$ makes $\beta_{0,v_c} = 0$. By Theorem \ref{mainthm},   $\rho_{_{k,v_c}} = \check\rho_{_{k,\bar0}} = R( \vert \varphi_{_{k,\bar0}} \rangle)$,  which makes  $S_{v_c}(\mathscr{E})$:
\begin{align} \label{svc}
S_{v_c}(\mathscr{E}) &=   \frac{1}{K}   \sum_{k \in \mathscr{K}}  \text{tr}  \big(  E_{k}   \check\rho_{_{k,\bar0}} \big)
	 \end{align}
Because $\sum_{k \in \mathscr{K}} E_{k}  = \textbf{I}$ and $\text{tr}  \big( \check\rho_{_0} \big) = 1$, 
\begin{equation*} 
\frac{1}{K}   \sum_{k \in \mathscr{K}}  \text{tr}  \big(  E_{k}   \check\rho_{_0} \big) = \frac{1}{K}    \text{tr}  \big( \check\rho_{_0} \big) = \frac{1}{K}
\end{equation*}

This  simplifies  $S_{v}(\mathscr{E})$ in  \eqref{Ppsi}.
\begin{align} \label {Ppsirw}
	S_{v}(\mathscr{E})
	&= \frac{|\beta_{v,0}|^{2} }{K} +     (1-|\beta_{v,0}|^{2}) S_{v_c}(\mathscr{E})
\end{align}

\end{proof}


\section{Success Probability of the Character Query} \label{section:oq}
We are ready to  show that the character query has the maximum success probability,  strictly higher than that of the standard query. Once we have shown this, we take a closer look at the Deutsch's problem  for single bit.

\begin{cor} \label{charquery} 
The optimum success probabilities satisfy:
\begin{align*} 
	 \frac{1 }{K} \leq  \hat{S}_{v}  \leq   \hat{S}_{v_c}  \quad \forall\vert v \rangle \in S_0
 \end{align*}
The lower equality is true if and only if $|\beta_{v,0}| = 1$ $(\vert v \rangle \in  \{e^{i\theta}\vert v_0 \rangle : \theta \in \mathbb{R}\})$, and the upper equality is true if and only if $\beta_{v,0} = 0$. In particular,  $\hat{S}_{v_s} <  \hat{S}_{v_c} $.
\end{cor}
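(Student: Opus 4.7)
The plan is to reduce everything to Corollary~\ref{optquery}(ii) and exploit the fact that it expresses $S_v(\mathscr{E})$ as an affine function of $S_{v_c}(\mathscr{E})$ whose slope and intercept depend on $|\beta_{v,0}|^{2}$ but \emph{not} on the measurement. Setting $\alpha := |\beta_{v,0}|^{2}$, since $\{\vert \omega^{-y}\rangle\}_{y \in X}$ is an orthonormal basis of $\mathbb{C}[X]$ and $\vert v\rangle$ has unit norm, Parseval gives $\alpha \in [0,1]$, with $\alpha = 1$ precisely when $\vert v\rangle \in \{e^{i\theta}\vert v_0\rangle : \theta \in \mathbb{R}\}$ and $\alpha = 0$ precisely when $\beta_{v,0} = 0$.

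The first step is to take the supremum in Corollary~\ref{optquery}(ii). Because $(1-\alpha)\geq 0$ is a constant in $\mathscr{E}$ and $\alpha/K$ is also constant, and because the suprema are attained (by compactness of $\mathscr{E}_G$ and continuity of $S_v$), this yields the key identity
\begin{equation*}
\hat{S}_v \;=\; \frac{\alpha}{K} + (1-\alpha)\,\hat{S}_{v_c}.
\end{equation*}
The second step is to observe $\hat{S}_{v_c} \geq 1/K$, witnessed by the uniform guessing POVM $E_k = \textbf{I}/K$, which gives $S_{v_c}= 1/K$ since each $\rho_{k,v_c}$ has unit trace. Rewriting the identity as $\hat{S}_v = \hat{S}_{v_c} - \alpha(\hat{S}_{v_c} - 1/K)$ delivers the sandwich $1/K \leq \hat{S}_v \leq \hat{S}_{v_c}$ immediately.

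For the equality conditions, under the nondegenerate assumption $\hat{S}_{v_c} > 1/K$ the map $\alpha \mapsto \hat{S}_v$ is strictly decreasing on $[0,1]$, so $\hat{S}_v = 1/K$ iff $\alpha=1$ and $\hat{S}_v = \hat{S}_{v_c}$ iff $\alpha=0$, matching the stated characterizations. For the final claim, I compute from~\eqref{fourx} that $\beta_{v_s,0} = \langle \omega^{0}\vert v_s\rangle = 1/\sqrt{D}$, so $\alpha = 1/D \in (0,1)$ whenever $D\geq 2$, and the strict inequality $\hat{S}_{v_s} < \hat{S}_{v_c}$ follows directly from the equality characterization just established.

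The step I expect to be the main obstacle is justifying $\hat{S}_{v_c} > 1/K$ in the nondegenerate case $K\geq 2$ with distinct subgroups; without this, both equalities in the sandwich could hold trivially. By Theorem~\ref{mainthm}, $\rho_{k,v_c}=R(\varphi_{k,\bar 0})$, and $\varphi_{k,\bar 0}$ depends on $H_k$ only through $\sum_{h\in H_k}\vert h\rangle$, so distinct $H_k$ produce distinct subgroup states. Any measurement that partially resolves them (for instance the Pretty Good Measurement of \eqref{pgmdef}, or projectors onto coset-sum directions attached to a single $H_k$) will strictly exceed the guessing rate $1/K$. This is the only ingredient beyond Corollary~\ref{optquery} that requires an explicit construction.
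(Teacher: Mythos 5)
Your overall architecture coincides with the paper's: both proofs run the affine relation of Corollary \ref{optquery}\ref{optmeas} through the maximization over $\mathscr{E}_G$, use the trivial measurement $T_k=\frac{1}{K}\textbf{I}$ (each $\rho_{k,v}$ has unit trace) to anchor the value $1/K$, and reduce everything --- the sandwich, both equality characterizations, and $\hat{S}_{v_s}<\hat{S}_{v_c}$ via $\beta_{v_s,0}=1/\sqrt{D}$ --- to the single strict inequality $\hat{S}_{v_c}>1/K$. Up to that point your argument is correct, and your explicit identity $\hat{S}_v=\frac{\alpha}{K}+(1-\alpha)\hat{S}_{v_c}$ is a clean way to package what the paper leaves implicit.

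The gap is that you never prove $\hat{S}_{v_c}>1/K$; you only assert that ``any measurement that partially resolves'' the distinct states $\rho_{k,v_c}$ ``will strictly exceed the guessing rate.'' That assertion is exactly the nontrivial content of the paper's proof, and it is not self-evident: ``partially resolves'' is undefined, and whether a specific POVM such as the PGM of \eqref{pgmdef} strictly beats $1/K$ is precisely what must be shown. The paper does this by passing to the Fourier components $\tilde{\varphi}^\lambda_k$ of $\vert\varphi_{k,\bar{0}}\rangle$ (equations \eqref{varphibartriv}, \eqref{varphibarnontriv}), computing the PGM blockwise, rewriting $S_{v_c}(\mathscr{M})$ and $S_{v_c}(\mathscr{T})$ as sums of squared Hilbert--Schmidt norms of $(\tilde{\varphi}^\lambda)^{-1/4}\tilde{\varphi}^\lambda_k(\tilde{\varphi}^\lambda)^{-1/4}$, and invoking a Cauchy--Schwarz lemma whose equality case fails because some $\tilde{\varphi}^\lambda_k\neq\tilde{\varphi}^\lambda_{k'}$ when the subgroups are distinct; that computation is the bulk of the proof and is absent from your write-up. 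Your observation that the states are pairwise distinct (since $\vert\varphi_{k,\bar{0}}\rangle$ determines $\sum_{h\in H_k}\vert h\rangle$ and $R$ is faithful) is correct and necessary, but to close the gap you would still need either to reproduce the paper's PGM-plus-norm-inequality argument or to give a genuinely different construction --- for instance, a two-outcome Helstrom measurement for one distinguishable pair $\rho_{k,v_c}\neq\rho_{k',v_c}$, padded with zero operators to a $K$-outcome POVM, yields success probability $\frac{1}{K}\bigl(1+\tfrac{1}{2}\lVert\rho_{k,v_c}-\rho_{k',v_c}\rVert_1\bigr)+\frac{K-2}{K}\cdot 0$ summed appropriately, which exceeds $1/K$; that elementary route would suffice, but it is not in your proposal as written.
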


\begin{proof}
First, define the trivial measurement  to be $\mathscr{T}=\{T_k\}$,  $T_k= \frac{1}{K} \textbf{I}$, where $\textbf{I}$ is the identity operator on $\mathbb{C}[G]$. The success probability of the trivial measurement, for any slate $\vert v \rangle$, is $S_{v}(\mathscr{T}) =  \frac{1}{K}$. This is because $\text{tr}(\rho_{_{k,v}})=1  \:\: \forall k \in \mathscr{K}$. 

By Corollary \ref{optquery} \ref{optmeas}, we just need to  show that there exists a measurement $\mathscr{M} \in \mathscr{E}_G$ which has a success probability of  the character query for $\mathscr{M}$, $S_{v_c}(\mathscr{M})$, greater than $\frac{1}{K}$.   Since the trivial measurement $\mathscr{T}$ has success probability  $\frac{1}{K}$, it is sufficient to show:
\begin{equation} \label{chkcond}
\exists \:\: \mathscr{M} \in \mathscr{E}_G \:\: :  \quad S_{v_c}(\mathscr{T}) < S_{v_c}(\mathscr{M}) 
\end{equation}

Before defining $\mathscr{M}$, we  consider the subgroup states for the character query, $\{\rho_{_{k,v_c}}\}_{k \in \mathscr{K}}$. As seen in Corollary \ref{optquery}, $\rho_{_{k,v_c}} = R( \vert \varphi_{_{k,\bar0}} \rangle)$. Decompose   each $ \vert \varphi_{_{k,\bar0}} \rangle$ into its Fourier components,  denoted by $\tilde{\varphi}^{\lambda}_{k}$ for  $\lambda \in \hat G$.
\begin{align}  \label{varphibartriv}
 \tilde{\varphi}^{0}_{k} &:= \pi^0( \vert \varphi_{_{k,\bar0}} \rangle) \nonumber \\
 &= \frac{1}{|G|} \bigg(  \frac{D}{(D-1)}\sum_{h \in H_k } \pi^0(h) -\frac{1}{(D-1)}\sum_{g \in G } \pi^0(g)    \bigg)   \nonumber \\
&= \frac{1}{|G|} \bigg(  \frac{D}{(D-1)}|H_k| -\frac{1}{(D-1)}|G|   \bigg)   \nonumber \\
&=   \frac{D-N_k}{(D-1)N_k}
\end{align}
For every non-trivial representation $\lambda \in \hat G \setminus \{0\}$:
\begin{align}  \label{varphibarnontriv}
\tilde{\varphi}^{\lambda}_{k} &:=  \pi^{\lambda}( \vert \varphi_{_{k,\bar0}} \rangle) \nonumber \\
&= \frac{1}{|G|}   \frac{D}{(D-1)}\sum_{h \in H_k } \pi^{\lambda}(h)    \nonumber \\
&=  \frac{D}{(D-1)N_k} \frac{1}{|H_k|} \sum_{h \in H_k } \pi^{\lambda}(h)     	
\end{align}
Also define:  
\begin{equation*}
\tilde{\varphi}^{\lambda} :=
 \frac{1}{K} \sum_{k \in \mathscr{K}} \tilde{\varphi}^{\lambda}_{k}
 \end{equation*} 

Note that: 
\begin{align*}
 \text{Ker}(\tilde{\varphi}^{\lambda}) &= \cap_{k \in \mathscr{K}} \text{Ker}( \tilde{\varphi}^{\lambda}_{k} ) \\
\text{Im}(\tilde{\varphi}^{\lambda}) &= V^{\lambda} \setminus \text{Ker}(\tilde{\varphi}^{\lambda})
 \end{align*}

We  take the measurement  $\mathscr{M} =  \{M_k\}$  to be the Pretty Good Measurement from section \ref{subsection:hspsd} \eqref{pgmdef}, for the subgroup states $\{\rho_{_{k,v_c}}\}_{k \in \mathscr{K}}$ with prior probability $\{p_k\}_{k \in \mathscr{K}}$ given by $p_k = 1/K$. Using  \eqref{rholrtrep}  and the definitions above we can write the subgroup states as:
\begin{equation*}
\rho_{_{k,v_c}} \cong  \bigoplus_{\lambda \in \hat{G}} \textbf{I}_{V^{{\lambda}^{*}}}\otimes \tilde{\varphi}^{\lambda}_{k}
\end{equation*}
and compute the PGM:
\begin{equation*} 
M_k \cong  \bigoplus_{\lambda \in \hat{G}} \textbf{I}_{V^{{\lambda}^{*}}}\otimes M^{\lambda}_k
\end{equation*}
where: 
\begin{equation*}
M^{\lambda}_k = \frac{1}{K} \left({(\tilde{\varphi}^{\lambda})}^{-1/2} \tilde{\varphi}^{\lambda}_{k} {(\tilde{\varphi}^{\lambda})}^{-1/2}\oplus \textbf{I}_{\text{Ker}(\tilde{\varphi}^{\lambda})} \right)
\end{equation*}
Here $\textbf{I}_{\text{Ker}(\tilde{\varphi}^{\lambda})}$ is the identity operator on $\text{Ker}(\tilde{\varphi}^{\lambda})$, and  ${(\tilde{\varphi}^{\lambda})}^{-1/2}$ is:
\begin{equation*}
{(\tilde{\varphi}^{\lambda})}^{-1/2} := \left({\tilde{\varphi}^{\lambda}\mid_{\text{Im}(\tilde{\varphi}^{\lambda})}}\right)^{-1/2}\oplus \textbf{I}_{\text{Ker}(\tilde{\varphi}^{\lambda})}
\end{equation*}

From \eqref{svc}, the success probabilities of the character query for the PGM $\mathscr{M}$ and the trivial measurement $\mathscr{T}$ are:
\begin{align*} 
S_{v_c}(\mathscr{M}) &= \frac{1}{K} \sum_{\lambda \in \hat{G}}  d_{_\lambda} \sum_{k \in \mathscr{K}}  \text{tr}  \left(  M^{\lambda}_k    \tilde{\varphi}^{\lambda}_{k} \right)  = \frac{1}{K} \sum_{\lambda \in \hat{G} }  d_{_\lambda} \sum_{k \in \mathscr{K}}   \frac{1}{K}\text{tr}  \left( {(\tilde{\varphi}^{\lambda})}^{-1/2} \tilde{\varphi}^{\lambda}_{k} {(\tilde{\varphi}^{\lambda})}^{-1/2}   \tilde{\varphi}^{\lambda}_{k} \right ) \nonumber \\
S_{v_c}(\mathscr{T}) &=  \frac{1}{K} \sum_{\lambda \in \hat{G}}  d_{_\lambda}\sum_{k \in \mathscr{K}}  \text{tr}  \left(  T^{\lambda}_{k}    \tilde{\varphi}^{\lambda}_{k}\right) = \frac{1}{K} \sum_{\lambda \in \hat{G} }  d_{_\lambda}   \text{tr}  \left(  \tilde{\varphi}^{\lambda}  \right)
\end{align*}

We recall the inner product and norm on $M_{d_{_\lambda}}(\mathbb{C})$:
\begin{align*} 
<A,B> &=  \text{tr}  \left(  A  B^\dag \right)  \\
\lVert A \rVert &=  \sqrt{\text{tr}  \left(  A  A^\dag \right)}  
\end{align*}
where $ A,B \in M_{d_{_\lambda}}(\mathbb{C})$. 
This allows  $S_{v_c}(\mathscr{M})$ and $S_{v_c}(\mathscr{T})$ to be expressed as:
\begin{align*} 
S_{v_c}(\mathscr{M}) &=  \frac{1}{K} \sum_{\lambda \in \hat{G} }  d_{_\lambda} \sum_{k \in \mathscr{K}}   \frac{1}{K} \lVert{(\tilde{\varphi}^{\lambda})}^{-1/4} \tilde{\varphi}^{\lambda}_{k} {(\tilde{\varphi}^{\lambda})}^{-1/4} {\rVert}^2 \nonumber \\
S_{v_c}(\mathscr{T})  &= \frac{1}{K} \sum_{\lambda \in \hat{G} }  d_{_\lambda}  \lVert {(\tilde{\varphi}^{\lambda})}^{1/2}   {\rVert}^2
\end{align*}
Then the  inequality in \eqref{chkcond} follows from these additional observations:
\begin{enumerate}[label=(\roman{*})]
\item $\exists \lambda \in \hat{G}$, and $k,k' \in \mathscr{K}$ such that $\tilde{\varphi}^{\lambda}_{k} \neq \tilde{\varphi}^{\lambda}_{k'}$
\item  
$
\sum_{k \in \mathscr{K}}  {(\tilde{\varphi}^{\lambda})}^{-1/4} \tilde{\varphi}^{\lambda}_{k} {(\tilde{\varphi}^{\lambda})}^{-1/4} = K \: {(\tilde{\varphi}^{\lambda})}^{1/2} \quad \:\: \forall  \lambda \in \hat{G}
$
\end{enumerate}
and the next lemma.
\end{proof}

\begin{lemma} Let $V$ be a finite dimensional Hilbert space over $\mathbb{C}$. Let $\{v_i\}^n_{i=1}$ be a set of vectors in $V$. Then:
\begin{align*}
   \frac{1}{n} \lVert \sum^n_{i=1} v_i {\rVert}^2 \leq  \sum^n_{i=1}  \lVert v_i {\rVert}^2 
\end{align*}
with 
equality if and only if $v_i = v_j \:\: \forall  i,j \in \{1,\ldots,n\}$.
\end{lemma}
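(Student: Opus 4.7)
The plan is to prove the inequality by reducing it to the obvious nonnegativity of a sum of squared norms, namely $\sum_{i<j}\|v_i - v_j\|^2 \geq 0$. This automatically delivers the equality case as well, which is the subtle part of the claim.

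First I would expand the right-hand side of that inequality using the sesquilinear form: for each pair,
\begin{equation*}
\|v_i - v_j\|^2 = \|v_i\|^2 + \|v_j\|^2 - 2\operatorname{Re}\langle v_i, v_j\rangle.
\end{equation*}
Summing over $i<j$ yields
\begin{equation*}
\sum_{i<j} \|v_i - v_j\|^2 = (n-1)\sum_{i=1}^n \|v_i\|^2 - 2\sum_{i<j}\operatorname{Re}\langle v_i, v_j\rangle.
\end{equation*}

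Next I would expand $\|\sum_i v_i\|^2$ in the same way:
\begin{equation*}
\left\|\sum_{i=1}^n v_i\right\|^2 = \sum_{i=1}^n \|v_i\|^2 + 2\sum_{i<j}\operatorname{Re}\langle v_i, v_j\rangle,
\end{equation*}
so that adding the two displayed identities gives the key identity
\begin{equation*}
\sum_{i<j} \|v_i - v_j\|^2 = n\sum_{i=1}^n \|v_i\|^2 - \left\|\sum_{i=1}^n v_i\right\|^2.
\end{equation*}
Since the left-hand side is a sum of squared norms, it is nonnegative, which is exactly the claimed inequality after dividing by $n$.

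For the equality case, the right-hand side vanishes iff every term $\|v_i - v_j\|^2$ on the left vanishes, i.e.\ iff $v_i = v_j$ for all $i,j$. I expect no real obstacle here; the only thing to be a little careful about is using $\operatorname{Re}\langle v_i, v_j\rangle$ rather than $\langle v_i, v_j\rangle$ throughout the expansion, since the space is complex. An alternative route would be Cauchy--Schwarz applied to $(v_1,\ldots,v_n)$ and $(v_1,\ldots,v_n)$ against $(1,\ldots,1)$ in $V^n$ or $\mathbb{C}^n$, but extracting the sharp equality condition is cleaner via the squared-difference identity above.
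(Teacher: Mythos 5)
Your proof is correct, and it follows a genuinely different (and in one respect cleaner) route than the paper's. You establish the exact identity
\begin{equation*}
\sum_{1\le i<j\le n}\lVert v_i-v_j\rVert^2 \;=\; n\sum_{i=1}^n\lVert v_i\rVert^2-\Bigl\lVert\sum_{i=1}^n v_i\Bigr\rVert^2 ,
\end{equation*}
so both the inequality and the sharp equality condition are read off at once from the nonnegativity, respectively the vanishing, of the left-hand side; no inequality is invoked until the final trivial step. The paper instead expands $\lVert\sum_i v_i\rVert^2=\sum_{i,j}\langle v_i,v_j\rangle$ and then estimates the cross terms by Cauchy--Schwarz, $\langle v_i,v_j\rangle+\langle v_j,v_i\rangle\le 2\lVert v_i\rVert\,\lVert v_j\rVert$, obtaining the lower bound $\sum_{i<j}\bigl(\lVert v_i\rVert-\lVert v_j\rVert\bigr)^2\ge 0$. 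That yields the inequality directly, but the ``only if'' direction of the equality statement then needs an extra observation the paper leaves implicit: equality forces equality in each Cauchy--Schwarz step (so $v_i$ and $v_j$ are nonnegative real multiples of one another) together with equal norms, and only the combination gives $v_i=v_j$. Your identity-based argument buys an immediate and transparent equality case, at no extra cost; the paper's argument is equally valid but requires tracking when Cauchy--Schwarz is tight to finish the equality claim.
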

\begin{proof}
\begin{align*} 
n \sum^n_{i=1}  \lVert v_i {\rVert}^2 -   \lVert \sum^n_{i=1} v_i {\rVert}^2 &= n \sum^n_{i=1}  \lVert v_i {\rVert}^2 -   \sum^n_{i,j=1} \langle v_i,v_j \rangle \\
&= (n-1) \sum^n_{i=1}  \lVert v_i {\rVert}^2 -   \sum_{1\leq i<j \leq n}\left( \langle v_i,v_j \rangle + \langle v_j,v_i \rangle \right)\\
&\geq (n-1) \sum^n_{i=1}  \lVert v_i {\rVert}^2 -   \sum_{1\leq i<j \leq n}2 \lVert v_i \rVert  \lVert v_j \rVert \\
&=   \sum_{1\leq i<j \leq n}  \lVert v_i {\rVert}^2 -  2 \lVert v_i \rVert  \lVert v_j \rVert +  \lVert v_j {\rVert}^2 \\
&=   \sum_{1\leq i<j \leq n}  \left(\lVert v_i {\rVert} -    \lVert v_j {\rVert}\right)^2 
\end{align*}
where the inequality  above is a result of  the Cauchy-Schwarz inequality. The lemma follows.
\end{proof}

\begin{remark}
In view of Corollary  \ref{optquery}, an optimal measurement  $\mathscr{\hat{E}}=\{\hat{E}_k\}$ can be described by:
\begin{equation*}  
\hat{E}_{k} \cong  \bigoplus_{\lambda \in \hat{G}} \textbf{I}_{V^{{\lambda}^{*}}}\otimes \hat{E}^{\lambda}_{_{k}}
\end{equation*}
Using  \eqref{varphibartriv} and \eqref{varphibarnontriv}:
\begin{align} \label{optpc}
\hat{S}_{v_c} &=  \frac{1 }{K} \sum_{k \in \mathscr{K}}\bigg[ \frac{D-N_k}{(D-1)N_k}  + \frac{D}{(D-1) |G| } \sum_{\lambda \in \hat{G}\setminus\{0\}} d_{_\lambda}  \sum_{h \in H_k }  \text{tr}  \big(\hat{E}^{\lambda}_{k}    \pi^{\lambda}(h) \big) \bigg]
\end{align}
Therefore, the character query performs better as the dimension ($D$) of the response register decreases.
\end{remark}

\begin{remark}
From the definition of  the standard query, $\beta_{_{0,v_s}} = 1/\sqrt{D}$. We conclude from  Corollary \ref{optquery} \ref{optmeas} that  $\lim_{D->\infty} \hat{S}_{v_c}=  \hat{S}_{v_s}$. So the optimum success probability of the character query decreases to that of the standard query as the dimension of the response register increases. 
\end{remark}

We now turn our attention to the Deutsch's problem for single bit ($n=1$) and understand it in our framework.  Let us  restate  the set up from  section \ref{subsection:hspsd}. The group $G = \mathbb{Z}/(2)$.  The response space is $X = \mathbb{Z}/(2)$. Possible hidden subgroups are: $H_0 := G $ and $H_1 := \{0\}$. Constant functions hide the subgroup $H_0$ and balanced functions hide $H_1$. The subgroup indices are $N_0=1$ and $N_1 = 2$ respectively. The system on which the oracle and measurement act is:
 \begin{equation*}
 \mathscr{H} :=  \mathbb{C}[G]\otimes\mathbb{C}[X] = \mathbb{C}[\mathbb{Z}/(2)]\otimes\mathbb{C}[\mathbb{Z}/(2)]
 \end{equation*}
where the first tensor factor $\mathbb{C}[G]$ is the query register, and the second factor $\mathbb{C}[X]$ is the response register. Since $X = \mathbb{Z}/(2)$, this makes the  dimension of the response register  $D=2$.  The group $G = \mathbb{Z}/(2)$ has two $1$-dimensional representations: the trivial representation and the alternating representation.
\begin{equation*}  
 \left .
 \begin{array}  {ll}
\pi^0(x) &= 1 \\
\pi^{-}(x) &= (-1)^x
\end{array}
\right\}  \quad \forall x \in \mathbb{Z}/(2)
\end{equation*}  
Phase kickback harnesses the character query: $\vert \Psi_c\rangle = \vert + \rangle\vert - \rangle = \frac{1}{\sqrt{2}} (\vert 0 \rangle +\vert 1 \rangle)\otimes\vert \omega^{-1} \rangle$,
where $\omega = -1$, and $\vert v_c\rangle =\vert \omega^{-1} \rangle =\vert - \rangle = \frac{1}{\sqrt{2}} (\vert 0 \rangle - \vert 1 \rangle)$.
After the oracle evaluation, we can use  \eqref{varphibartriv} and \eqref{varphibarnontriv} to deduce the Fourier components of the subgroup states $\rho_{_{k,v_c}} = R( \vert \varphi_{_{k,\bar0}}\rangle)$.

\begin{align*}
\rho_{_{0,v_c}} &\cong \pi^0( \vert \varphi_{_{0,\bar0}} \rangle) \oplus \pi^{-}( \vert \varphi_{_{0,\bar0}} \rangle) =  1\oplus 0\\
\rho_{_{1,v_c}} &\cong \pi^0(\vert \varphi_{_{1,\bar0}} \rangle) \oplus \pi^{-}(\vert \varphi_{_{1,\bar0}} \rangle) =  0\oplus 1
\end{align*}\
We choose a measurement $\mathscr{E} := \{E_k \}_{k \in \{0,1\}}$   as follows:
 \begin{equation*}  
E_{k} \cong  {E}^{0}_{_{k}} \oplus {E}^{-}_{_{k}} 
\end{equation*}
where:
\begin{align*}
E_0 \cong {E}^{0}_{_{0}}\oplus{E}^{-}_{_{0}} =  1\oplus 0 \\
E_1 \cong {E}^{0}_{_{1}}\oplus{E}^{-}_{_{1}} =  0 \oplus 1
\end{align*}
By  \eqref{svc}, this measurement has a  probability of success $ S_{v_c}(\mathscr{E}) = 1$.  This is precisely what we call the measurement for Deutsch's problem in section \ref{subsection:pk}.  By using the Hadamard transform $\textbf{H}$, the  character basis of $\mathbb{C}[\mathbb{Z}/(2)]$ is rotated to the computational basis, i.e.:
\begin{equation*}
\textbf{H}: \left \{
\begin{array}{lll}
 \frac{1}{\sqrt{2}}(\vert 0 \rangle + \vert 1 \rangle) &\mapsto&\vert 0 \rangle \\
\\
 \frac{1}{\sqrt{2}}(\vert 0 \rangle - \vert 1 \rangle) &\mapsto&\vert 1 \rangle  \\
\end{array} \right .
\end{equation*}
Measuring in the computational basis is then equivalent to the measurement above. It is because the subgroup $H_1 = \{0\}$ has an index the same as  the dimension of the response register, $N_1 = D$, that it has no projection on the  trivial representation. In contrast, the subgroup $H_0 = G$ has an index less than  the dimension of the response register, $N_0 <  D$, which results in its producing a non-zero projection on the trivial representation. That is why the subgroups can be distinguished  with probability $1$. This illustrates the part played by the dimension of the response register relative to the subgroup indices, and explains the algorithm  in  the group representation context. 


\section{Conjugate Subgroups} \label{section:consub}
We specialize further to a  class of subgroups consisting of conjugates of a particular subgroup $H \leq G$, i.e., $\mathscr{S}=\{g^{-1}Hg : g \in G\}$, and determine the optimum success probability for any $\vert v \rangle  \in S_0$. To that end, we exploit a result of Moore and Russell~\cite{cmar:pgm}. $Plancherel \; measure$, denoted by ${\mu}_{_P}$, is the probability distribution on $\hat G$ defined as:
\begin{equation*}
{\mu}_{_P}(\lambda) := d^2_\lambda/|G| \quad \text{for} \: \lambda \in \hat G
\end{equation*}
In particular:
\begin{equation*}
{\mu}_{_P}(0) = 1/|G| 
\end{equation*}

As in Moore and Russell~\cite{cmar:pgm}, define the set  $\Lambda_H \subseteq \hat G$:
\begin{equation*}
\Lambda_H := \big\{\lambda \in \hat G \: : \: \frac{1}{|H|} \sum_{h \in H } \pi^{\lambda}(h) \neq 0 \big\}
\end{equation*}
We recognize $\frac{1}{|H|} \sum_{h \in H } \pi^{\lambda}(h) $ as the projection onto  the  space of $H$-invariants in  $\lambda \in \hat G$.  

Denote the normalizer of $H$ in $G$ by ${\begin{rm}N\end{rm}}_G(H)$. 

\begin{cor} \label{conjsub}
Let  $H \leq G$, and $\mathscr{S}=\{gHg^{-1} : g \in G\}$.    Let $N:=[G:H]$ and $N_{\begin{rm}C\end{rm}}:=[G:{\begin{rm}N\end{rm}}_G(H)]$.  Then for any slate $\vert v \rangle$, the optimum success probability, $ \hat{S}_{v}$,  is given by:
  \begin{align*} 
	 \hat{S}_{v}&= \bigg(|\beta_{v,0}|^{2} -  (1-|\beta_{v,0}|^{2})  \frac{1}{(D-1)}\bigg) \frac{1}{N_{\begin{rm}C\end{rm}}}+   (1-|\beta_{v,0}|^{2})   \frac{D}{(D-1)}\frac{|H|}{N_{\begin{rm}C\end{rm}}}{\mu}_{_P}(\Lambda_H) 
\end{align*}
\end{cor}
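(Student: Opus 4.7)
The plan is to apply Corollary \ref{optquery} \ref{optmeas} to reduce the statement to the computation of $\hat{S}_{v_c}$ alone. Since any measurement optimal for $\vert v_c \rangle$ is simultaneously optimal for every slate with $|\beta_{v,0}| \neq 1$, the corollary yields
\[
\hat{S}_v = \frac{|\beta_{v,0}|^2}{N_{\mathrm{C}}} + (1-|\beta_{v,0}|^2)\,\hat{S}_{v_c},
\]
where $K = N_{\mathrm{C}}$ because the number of distinct conjugates of $H$ is $[G : \mathrm{N}_G(H)]$ by orbit-stabiliser. Matching this against the claimed expression reduces everything to proving $\hat{S}_{v_c} = -\tfrac{1}{(D-1)N_{\mathrm{C}}} + \tfrac{D|H|}{(D-1)N_{\mathrm{C}}}\mu_{_P}(\Lambda_H)$.

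To compute $\hat{S}_{v_c}$, I pass to the Fourier decomposition of $\rho_{k,v_c} = R(\vert\varphi_{k,\bar{0}}\rangle)$ supplied by \eqref{rholrtrep}, \eqref{varphibartriv} and \eqref{varphibarnontriv}. The trivial-representation component is the scalar $(D-N)/((D-1)N)$, independent of $k$ because all conjugates share the common index $N$; for each non-trivial $\lambda \in \hat{G}$, the component is $\tilde{\varphi}^\lambda_k = \tfrac{D}{(D-1)N}\,P^\lambda_k$, where $P^\lambda_k := \tfrac{1}{|H|}\sum_{h\in H_k}\pi^\lambda(h)$ is the projection onto the $H_k$-invariants of $V^\lambda$. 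Choosing $g_k \in G$ with $H_k = g_k H g_k^{-1}$ gives $P^\lambda_k = \pi^\lambda(g_k)\,P^\lambda_0\,\pi^\lambda(g_k)^{-1}$, and $P^\lambda_0 \neq 0$ iff $\lambda \in \Lambda_H$. By Corollary \ref{optquery} \ref{measrr} we may take the optimal $E_k$ in matching block-diagonal form, so the optimisation decouples across irreducibles.

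For each non-trivial $\lambda$ the task is to maximise $\sum_k \mathrm{tr}(E^\lambda_k P^\lambda_k)$ subject to $E^\lambda_k \geq 0$ and $\sum_k E^\lambda_k = \textbf{I}_{V^\lambda}$. Because each $P^\lambda_k$ is an orthogonal projection, $\mathrm{tr}(E^\lambda_k P^\lambda_k) \leq \mathrm{tr}(E^\lambda_k)$, so the sum is bounded above by $d_{_\lambda}$, and the bound vanishes trivially when $\lambda \notin \Lambda_H$. For $\lambda \in \Lambda_H$ the bound is attained as soon as one shows $\sum_k P^\lambda_k$ is a scalar operator, since then $E^\lambda_k := P^\lambda_k / c_\lambda$ (with $\sum_k P^\lambda_k = c_\lambda \textbf{I}_{V^\lambda}$) is a valid POVM realising $d_{_\lambda}$. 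The scalar claim is the Schur-lemma observation used by Moore and Russell~\cite{cmar:pgm}: because $\mathrm{N}_G(H)$ fixes $P^\lambda_0$ under conjugation (from $nHn^{-1} = H$), the sum $\sum_{g_k} \pi^\lambda(g_k) P^\lambda_0 \pi^\lambda(g_k)^{-1}$ is independent of the choice of coset representatives and permuted to itself by every $\pi^\lambda(x)$, hence commutes with the irreducible $\pi^\lambda$.

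Assembling the trivial and non-trivial contributions gives
\[
\hat{S}_{v_c} = \frac{D-N}{N_{\mathrm{C}}(D-1)N} + \frac{D|H|}{(D-1)N_{\mathrm{C}}}\sum_{\lambda \in \Lambda_H \setminus \{0\}} \mu_{_P}(\lambda).
\]
A short manipulation using $\mu_{_P}(0) = 1/|G|$ and $N|H| = |G|$ shows $\tfrac{D-N}{N_{\mathrm{C}}(D-1)N} - \tfrac{D|H|\mu_{_P}(0)}{(D-1)N_{\mathrm{C}}} = -\tfrac{1}{(D-1)N_{\mathrm{C}}}$, which folds the $\lambda = 0$ term into $\mu_{_P}(\Lambda_H)$ and yields the target formula for $\hat{S}_{v_c}$. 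Substituting back into the reduction displayed in the first paragraph completes the proof. I expect the Schur-lemma step identifying $\sum_k P^\lambda_k$ as a scalar to be the only delicate ingredient; the remainder is careful bookkeeping of the Fourier decomposition and the $(D, N, N_{\mathrm{C}})$ arithmetic.
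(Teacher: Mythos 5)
Your proof is correct, but it takes a genuinely different route from the paper's. The paper's argument is a short bootstrap: it cites Moore and Russell's theorem that the PGM is optimal for the \emph{standard} query with $\hat{S}_{v_s} = \frac{|H|}{N_{\mathrm{C}}}\mu_{_P}(\Lambda_H)$, observes that $|\beta_{0,v_s}| = 1/\sqrt{D} \neq 1$ so by Corollary \ref{optquery}\ref{optmeas} an optimal measurement for $\vert v_s\rangle$ is optimal for every slate, solves the linear relation $\hat{S}_{v_s} = \frac{1}{DN_{\mathrm{C}}} + \frac{D-1}{D}\hat{S}_{v_c}$ for $\hat{S}_{v_c}$, and then applies the relation once more for general $\vert v\rangle$. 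You instead bypass Moore--Russell entirely: after the same reduction via Corollary \ref{optquery}\ref{optmeas}, you compute $\hat{S}_{v_c}$ from scratch using the block decomposition (with Corollary \ref{optquery}\ref{measrr} to restrict to block-diagonal POVMs so the optimisation decouples over $\hat{G}$), the elementary bound $\mathrm{tr}(E^\lambda_k P^\lambda_k) \leq \mathrm{tr}(E^\lambda_k)$ valid because each $P^\lambda_k$ is a projection, and the Schur-lemma fact that $\sum_k P^\lambda_k$ is a nonzero scalar precisely when $\lambda \in \Lambda_H$ (well-defined over $G/\mathrm{N}_G(H)$ since $\mathrm{N}_G(H)$ fixes $P^\lambda_H$), with the bound attained by $E^\lambda_k = P^\lambda_k/c_\lambda$ --- which is, block by block, exactly the PGM of the paper's Corollary \ref{charquery}. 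Your expression for $\hat{S}_{v_c}$ agrees with the paper's after folding the $\lambda=0$ term in via $\mu_{_P}(0)=1/|G|$ and $N|H|=|G|$, so the final formulas coincide. What your route buys is a self-contained derivation of the conjugate-subgroup optimum (in effect an independent proof of the Moore--Russell value, by a trace bound plus Schur rather than their optimality analysis) together with an explicit optimal measurement; what the paper's route buys is brevity and a direct transfer of a known optimality result to all ESTP slates through the linear relation.
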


\begin{proof} Let the subgroup indexing set be $\mathscr{K}=G/{\begin{rm}N\end{rm}}_G(H)$. Then $K = |\mathscr{K}| =  N_{\begin{rm}C\end{rm}}$. As $\beta_{_{0,v_s}} = 1/\sqrt{D}$ (the case of standard query), by Corollary \ref{optquery} \ref{optmeas}  an optimal measurement for $\vert v_s \rangle$ is optimal for any slate $\vert v \rangle$. Moore and Russell~\cite{cmar:pgm} have shown that  the Pretty Good Measurement (PGM)  for the standard query is such a measurement, and also derived its success probability.  Using their result:
\begin{align*}
	  \hat{S}_{v_s}  =  \frac{|H|}{N_{\begin{rm}C\end{rm}}}{\mu}_{_P}(\Lambda_H)
 \end{align*}
 From Corollary  \ref{optquery} \ref{optmeas}:
  \begin{align*} 
	 \hat{S}_{v_s}&= \frac{|\beta_{_{0,v_s}}|^2 }{N_{\begin{rm}C\end{rm}}} +   (1-|\beta_{_{0,v_s}}|^2)    \hat{S}_{v_c} \nonumber \\
			&= \frac{1}{DN_{\begin{rm}C\end{rm}}} +    \frac{(D-1)}{D}    \hat{S}_{v_c} 
\end{align*}
Together they imply:
  \begin{align*}
	 \hat{S}_{v_c}&= \frac{D}{(D-1)}\frac{|H|}{N_{\begin{rm}C\end{rm}}}{\mu}_{_P}(\Lambda_H) - \frac{1}{(D-1)N_{\begin{rm}C\end{rm}}}
 \end{align*}
Applying Corollary \ref{optquery} \ref{optmeas} again, we get the result we seek.
\end{proof}


\begin{remark}
Unlike in Deutsch's problem, here the subgroup states for the character query all have the same projection on the trivial representation, by  \eqref{varphibartriv}.  These contribute nothing toward distinguishing the subgroups,  and  are  eliminated if the dimension of the response register is the same as the  index of $H$, i.e., if $D=N$.  Assuming such is the case,  from the proof of Corollary \ref{conjsub}, the character query  succeeds with probability:
  \begin{align*} 
	 \hat{S}_{v_c}=   \frac{N}{(N-1)}\frac{|H|}{N_{\begin{rm}C\end{rm}}}{\mu}_{_P}(\Lambda_H\setminus {\{0\}}) 
\end{align*}
\end{remark}

\section{Conclusion and Discussion} \label{section:conclusion}
We have addressed the problem of query selection for the single-query hidden subgroup problem (HSP) over a general finite group $G$ and an abelian response space $X$. Our results indicate that for a single-query algorithm with measurements of the query register, and among the class of queries in an equal superposition state  over the group, we can maximize the subgroup identification (success) probability using a query that has no projection on the  constant query (defined as the equal superposition over the group as well as the response space). The character query is  an example of this set of queries contained in  the  $|X| -1$ dimensional subspace (in the above discussion $D=|X|$) orthogonal to the constant query.

This generalization of  the phase kickback trick explains the phase kickback for Deutsch's problem (single bit) in representation theoretic terms.  It arises naturally  when we analyze how the success probability of the algorithm depends on the choice of the query. Imposing some structure (an abelian structure in this discussion) on $X$ is necessary to analyze the effect of different queries. The result that the optimal measurements for  algorithms  in our class are common to all ESTP queries other than the phase multiples of the constant query, is not something one would expect a priori.  It shows why in Deutsch's problem the phase kickback trick  and the standard method have the same optimal measurement. The character query outperforms  the query used in the standard method single-query HSP algorithms, and gives an improvement over the success probability of Moore and Russell~\cite{cmar:pgm} for conjugate subgroups.

For  the character query itself,  the success probability  decreases as  the response register dimension $|X|$ increases; an example  is the conjugate subgroups case (where the highest success probability is achieved when the subgroups have the same index as the  response register dimension). The  response register dimension relative  to the subgroup indices differentiates the subgroups through their projection on  the trivial representation. This has the potential to improve the success probability, as we saw in the analysis of phase kickback in Deutsch's problem, which is somewhat more complicated than the conjugate subgroups case.

Our approach towards optimizing single queries for HSP depends on  conceptualizing the  oracle functions as given by permutations. By  computing with the response space $X$,  and developing and interpreting results with respect to the group representation one gains insight about queries and oracle action. Finally, recognizing that the structure of the problem allows the use of   representation theory in conjunction with   the PGM (in general a sub-optimal measurement),  and the analysis of  measurements using norms, leads to the  proof of  optimality of the character query. 

We expect aspects of this approach, in particular the resulting generalized phase kickback, to have applications in other domains, among them multi-query settings as in  Bacon, Childs and Van Dam~\cite{bcv:opd} and Meyer and Pommersheim~\cite{mp:mqqs}.

\section*{Acknowledgments}
I gratefully acknowledge  enlightening  discussions with, and  important suggestions by   Prof.\ David Meyer and Prof.\ Nolan Wallach. I would also like to thank Dr.\ Orest Bucicovschi and Prof.\ James Pommersheim for the very helpful exchanges of ideas. This work has been partially supported by the Defense Advanced Research Projects Agency as part of the Quantum Entanglement Science
and Technology program under grant N66001-09-1-2025.


\newpage

\begin{bibdiv}
\begin{biblist}[\normalsize]

\def\MR#1{\relax\ifhmode\unskip\spacefactor3000 \space\fi
  \href{http://www.ams.org/mathscinet-getitem?mr=#1}{MR#1}}

\bib{nc:qcqi}{book}{
      author={Nielsen, M.},
      author={Chuang, I.},
      title={Quantum Computation and Quantum Information},
   publisher={Cambridge University Press},
        date={2000},
        ISBN={0-52163503-9},
}

\bib{sp:ptfd}{article}{
      author={Shor, P.},
      title={Polynomial-Time Algorithms for Prime Factorization and Discrete Logarithms on a Quantum Computer},
      journal={SIAM Journal on Computing},
       volume={26},
       issue={5},
       date={1997},
       pages={1484-1509}
}

 \bib{eh:qogi}{article}{
      author={Ettinger, M.},
      author={H\o yer, P.},
      title={A Quantum Observable for the Graph Isomorphism Problem},
      date={1999},
 note={\href{http://arxiv.org/abs/quant-ph/9901029}{arXiv:quant-ph/9901029}},
}

\bib{ro:qclp}{article}{
      author={Regev, O.},
      title={Quantum Computation and Lattice Problems},
       journal={SIAM Journal on Computing},
       volume={33},
       issue={3},
       date={2004},
       pages={738-760}
}

\bib{ah:sdqs}{article}{
      author={Holevo, A.},
       title={Statistical Decision Theory for Quantum Systems},
       journal={Journal of Multivariate Analysis},
       volume={3},
       date={1973},
pages={337-394},
}

\bib{ykl:opt}{article}{
      author={Yuen, H.},
      author={Kennedy, R.},
      author={Lax, M.},
       title={Optimum testing of multiple hypotheses in quantum detection theory},
       journal={IEEE Transactions on Information Theory},
       volume={IT-21(2)},
       date={1975},
pages={125-134},
}

\bib{ip:sao}{article}{
      author={Ip, L.},
       title={Shor's algorithm is optimal},
note={\href{http://lawrenceip.com/papers/hspsdpabstract.html}{lawrenceip.com/papers/hspsdpabstract.html}},
}

 \bib{ak:qmas}{article}{
      author={Kitaev, A.},
      title={Quantum measurements and the Abelian Stabilizer Problem},
       date={1996},
       journal={Electronic Colloquium on Computational Complexity (ECCC)},
       volume={3},
 }

\bib{me:hspee}{article}{
      author={Mosca, M.},
      author={Eckert, A.},
       title={The Hidden Subgroup Problem and Eigenvalue Estimation on a Quantum Computer},
       journal={Quantum Computing and Quantum Communications},
      volume={1509},
	issue={May},
       date={1999},
       pages={174-188}
}
E
\bib{eh:qanc}{article}{
      author={Ettinger, M.},
      author={H\o yer, P.},
      title={On Quantum Algorithms for Noncommutative Hidden Subgroups},
      journal={Advances in Applied Mathematics},
       volume={25},
       issue={3},
       date={2000},
       pages={239-251},
}

\bib{kup:sed}{article}{
      author={Kuperberg, G.},
      title={A Subexponential-Time Quantum Algorithm for the Dihedral Hidden Subgroup Problem},
       journal={SIAM Journal on Computing},
      volume={35},
	issue={1},
       date={2005},
       pages={170-188}
}

\bib{reg:sep}{article}{
      author={Regev, O.},
       title={A Subexponential Time Algorithm for the Dihedral Hidden Subgroup Problem with Polynomial Space},
       date={2004},
 note={\href{http://arxiv.org/abs/quant-ph/0406151}{arXiv:quant-ph/0406151}},
}

\bib{bcv:opd}{article}{
      author={Bacon, D.},
      author={Childs, A.},
      author={Van Dam, V.},
      title={Optimal measurements for the dihedral hidden subgroup problem},
       journal={Chicago Journal of Theoretical Computer Science},
       date={2005},
}

\bib{bcv:omqa}{article}{
      author={Bacon, D.},
      author={Childs, A.},
      author={Van Dam, V.},
      title={From optimal measurements to efficient quantum algorithms for the  hidden subgroup problem over semidirect product groups},
       journal={46th Annual IEEE Symposium on Foundations of Computer Science},
       pages={469-478},
       date={2005},
} 

\bib{cmar:pgm}{article}{
      author={Moore, C.},
      author={Russell, A.},
       title={For Distinguishing Conjugate Hidden Subgroups, the Pretty Good Measurement is as Good as it Gets},
      date={2005},
 note={\href{http://arxiv.org/abs/quant-ph/0501177}{arXiv:quant-ph/0501177}},
}

\bib{bd:opscm}{article}{
      author={Bacon, D.},
      author={Decker, T.},
      title={The optimal single copy measurement for the hidden subgroup problem},
       journal={Physical Review A},
       volume={77},
       issue={3},
       date={2008},
}

\bib{kr:eqwh}{article}{
      author={Krovi, H.},
      author={R\"oteller, M.},
      title={An Efficient Quantum Algorithm for the Hidden Subgroup Problem over Weyl-Heisenberg Groups},
      journal = {Lecture Notes in Computer Science}
      date={2008},
      volume={5393},
      pages={70-88}
}

\bib{fw:hrp}{article}{
      author={Wang, F.},
       title={The Hidden Subgroup Problem},
      date={2010},
 note={\href{http://arxiv.org/abs/quant-ph/1008.0010v1}{arXiv:quant-ph/1008.0010v1}},
}

\bib{cl:hrp}{article}{
      author={Lomont, C.},
       title={The Hidden Subgroup Problem - Review and Open Problems},
      date={2004},
 note={\href{http://arxiv.org/abs/quant-ph/0411037}{arXiv:quant-ph/0411037}},
}

\bib{hrt:nsr}{article}{
      author={Hallgren, S.},
      author={Russell, A.},
      author={Ta-shma, A.},
      title={Normal Subgroup Reconstruction and Quantum
Computation Using Group Representations},
      journal={Proc. 32nd ACM Symp. on Theory of Computing},
pages = {627--635},
       date={2000},
}

\bib{db:cghg}{article}{
      author={Bacon, D.},
       title={How a Clebsch-Gordan Transform Helps to Solve the Heisenberg Hidden Subgroup Problem},
      date={2007},
 note={\href{http://arxiv.org/abs/quant-ph/0612107}{arXiv:quant-ph/0612107}},
}

\bib{mrrs:psfs}{article}{
      author={Moore, C.},
      author={Rockmore, D.},
     author={Russell, A.},
      author={Schulman, L.},
       title={The Power of Strong Fourier Sampling: Quantum Algorithms for Affine Groups and Hidden Shifts},
       journal={SIAM Journal on Computing},
       volume={37},
issue={3},
       date={2007},
pages={938-958},
}

\bib{hh:qft}{article}{
author = {L. Hales and S. Hallgren},
title = {An Improved Quantum Fourier Transform Algorithm and Applications},
journal ={Annual IEEE Symposium on Foundations of Computer Science},
year = {2000},
issn = {0272-5428},
pages = {515-525},
}

\bib{mrr:psfs}{article}{
      author={Moore, C.},
      author={Rockmore, D.},
     author={Russell, A.},
       title={Generic quantum Fourier transforms},
       journal={ACM Transactions on Algorithms (TALG)},
       volume={2},
issue={4},
       date={2006},
pages={707-723},
}

\bib{dj:rspq}{article}{
      author={Deutsch, D.},
      author={Jozsa, R.},
       title={Rapid Solution of Problems by Quantum Computation},
       journal={Proceedings of the Royal Society of London A},
       volume={439},
       date={1992},
pages={553-558},
}



\bib{rj:qaft}{article}{
      author={Jozsa, R.},
       title={Quantum Algorithms and the Fourier Transform},
       journal={Proceedings of the Royal Society of London A},
       volume={454},
       date={1998},
pages={323-337},
} 

\bib{cemm:qar}{article}{
      author={Cleve, R.},
      author={Ekert, E.},
      author={Macchiavello, C.},
      author={Mosca, M.},
       title={Quantum Algorithms Revisited},
       journal={Proceedings of the Royal Society of London A},
       volume={454},
       date={1998},
pages={339-354},
}

\bib{ykl:qar}{article}{
   author={Yuen, H. P.},
     author={Kennedy, R.S.},
     author={Lax, M.},
   title={Optimum testing of multiple hypotheses in quantum detection theory},
   journal={IEEE Trans. Inform. Theory},
   volume={21},
   date={1975},
   pages={125--134},
}

\bib{wall:sri}{book}{
      author={Goodman, R.},
      author={Wallach, N. },
      title={Symmetry, Representations and Invariants},
      series={Graduate Texts in Mathematics},
   publisher={Springer},
        date={2009},
      volume={255},
        ISBN={978-0-387-79851-6},
}

\bib{mp:mqqs}{article}{
      author={Meyer, D.},
      author={Pommersheim, J.},
       title={Multi-query quantum sums},
    journal={preprint},
       date={2010},
}

\end{biblist}
\end{bibdiv}


\newpage
\appendix
\section*{A result from the representation theory of finite groups} \label{apprt}
The group $G \times G$ acts on $\mathbb{C}[G]$ by left and right translations. Denote this representation by $\tau$:
\begin{equation*} 
\tau(z,w)\phi(g) = \phi(z^{-1} gw) \quad \text{for }  z,g,w \in G
\end{equation*}

Let $\hat G$ be the equivalence classes of irreducible unitary representations of $G$, and fix a representation $(\pi^{\lambda}, V^{\lambda})$ in the class $\lambda$ for each $\lambda \in \hat G$. Let the dimension of $V^{\lambda}$ be $d_{_\lambda}$.  The dual representation $(\pi^{\lambda^*}, V^{\lambda^*})$  is given by:
\begin{equation*} 
\langle\pi^{\lambda^*}(z)v^*,v \rangle = \langle v^*,\pi^{\lambda}(z^{-1})v \rangle
\end{equation*}
for $z \in G, v\in V^{\lambda}$, and $v^* \in V^{{\lambda}^{*}}$.

For $\lambda \in \hat G$, define $\vartheta_\lambda(v^*\otimes v)(g) = \langle v^*,\pi^{\lambda}(g)v \rangle$ for $g \in G, v^* \in V^{\lambda^*} \text{and} \: v \in V$. Extend $\vartheta_\lambda$ to a linear map from $V^{{\lambda}^{*}}\otimes V^{\lambda}$ to $\mathbb{C}[G]$. Under the action of $G \times G$, the space $\mathbb{C}[G]$ decomposes as:
\begin{equation*} 
\mathbb{C}[G] =  \bigoplus_{\lambda \in \hat{G}} \vartheta_\lambda(V^{{\lambda}^{*}}\otimes V^{\lambda})
\end{equation*}
With this decomposition, the left translation \eqref{rtact} is isomorphic to:
\begin{equation*}
L(z) \cong  \bigoplus_{\lambda \in \hat{G}}\pi^{{\lambda}^{*}}(z)\otimes  \textbf{I}_{V^{\lambda}}
\end{equation*}
and the right translation is isomorphic to:
\begin{equation*}
R(z) \cong  \bigoplus_{\lambda \in \hat{G}} \textbf{I}_{V^{{\lambda}^{*}}}\otimes \pi^{\lambda}(z)
\end{equation*}
where $\textbf{I}_{V^{\lambda}}$ and $\textbf{I}_{V^{\lambda^*}}$ are identity operators on the spaces $V^{\lambda}$ and ${V^{\lambda^*}}$ respectively.

\end{document}